\documentclass[12pt]{article}
\usepackage{tikz}
\usetikzlibrary{positioning, arrows}
\usepackage{amssymb}
\usepackage{amsfonts}
\usepackage{amsmath}
\usepackage{xcolor}
\usepackage{caption}
\usepackage{graphicx}
\usepackage[affil-it]{authblk}

\numberwithin{equation}{section}

\newcommand{\mb}[1]{{\bss{#1}}}

\newcommand{\bs}[1]{{\boldsymbol{#1}}}

\def\la{\lambda}

\newcommand{\dsl}[1]{{\displaystyle{#1}}}

\newcommand{\eps}{\epsilon}

\usepackage{graphicx}
\newcommand{\HH}{{{\mathcal{H}}}}
\newtheorem{theorem}{Theorem}[section]
\newtheorem{prop}[theorem]{Proposition}

\newtheorem{remark}[theorem]{Remark}
\newenvironment{rem}{\begin{remark} \rm}{\end{remark}}

\newcommand{\wit}[1]{{{\widetilde{#1}}}}
\newcommand{\ou}{{\overline{u}}}

\newcommand{\BB}{\mathcal{B}}

\newcommand{\bm}[1]{\mbox{\boldmath{$#1$}}}
\newcommand{\bss}[1]{\boldsymbol{#1}}
\newcommand{\bgamma}{{{\bar{\gamma}}}}
\newcommand{\BcT}{{{\mb{\mathcal{T}}}}}

\renewcommand{\sfdefault}{bch}
\def\barr{\hbox{{\fontfamily{\sfdefault}\selectfont I\hskip -.35ex R}}}
\def\sbarr{\hbox{{\fontfamily{\sfdefault}\selectfont {\scriptsize I}\hskip -.25ex {\scriptsize R}}}}

\newcommand{\RR}{{\sbarr}}

\newcommand{\Sigb}{{{\bs{\Sigma}}}}
\newcommand{\CS}{{\mathcal{S}}}
\newcommand{\CM}{{\mathcal{M}}}
\newcommand{\bna}{{{\bs{\nabla}}}}

\begin{document}
\title{ \medskip Two-layer sharply stratified Euler fluids\\ in three  dimensions:
a Hamiltonian setting}
\author{The authors}
\author{ 
R. Camassa${}^1$, G. Falqui${}^{2,5}$, G. Ortenzi${}^{3,6}$, M. Pedroni${}^{4,5}$, E. Sforza${}^{2,5,7}$\footnote{Corresponding Author: e.sforza4@campus.unimib.it}
\medskip\\
{\small $^1$University of North Carolina, Carolina Center for Interdisciplinary Applied Mathematics,}
\\ 
{\small Department of Mathematics, Chapel Hill, NC 27599, USA}
\medskip\\
{\small  $^2$Department of Mathematics and Applications, University of  Milano-Bicocca,}
\\
{\small Via Roberto Cozzi 55, I-20125 Milano, Italy} 
\medskip\\
{\small  $^3$Dipartimento di Matematica ``G.Peano", Università di Torino,\\ Via Carlo Alberto 10, I-10123 Torino, Italy}
\\ \medskip
{\small $^4$Dipartimento di Ingegneria Gestionale, dell'Informazione e della Produzione,}
\\  
{\small Universit\`a di Bergamo, Viale Marconi 5, I-24044 Dalmine (BG), Italy}
\medskip\\
{\small  $^5$INFN, Sezione di Milano-Bicocca, Piazza della Scienza 3, I-20126 Milano, Italy}\\ \medskip
{\small  $^6$INFN, Sezione di Torino, Via Pietro Giuria 1, I-10125 Torino, Italy}\\ \medskip
{\small  $^7$Joint Ph.D.\ program University of Milano-Bicocca, University of Pavia and INdAM}
}

\medskip

\maketitle

\abstract{\noindent 
Three-dimensional two-layer incompressible Euler fluids are studied from a Hamiltonian perspective.
A natural Hamiltonian structure for the effective 2D model described by the interface-value of the field variables is obtained by means of a Hamiltonian reduction process from the 3D Poisson structure. The problem  of expressing the fluid's energy in terms of the reduced variables is considered, and it is shown that in the weakly non linear approximation our procedure gives rise to a so-called 2D Kaup-Broer-Kupershmidt Boussinesq  (KBK-B) model with {\em ``critical"} parameters.
A model  weakly dependent on one of the two horizontal directions is also discussed, whose unidirectionalization turns out to be the well-known Kadomtsev-Petviashvili (KP) equation. A Dirac-type reduction process of the Hamiltonian structure of the KBK-B model yields a natural Hamiltonian structure for KP {\em qua} 2+1-dimensional model.}

\section{Introduction}
\label{intro}
Density stratification is a crucial component of the dynamics of a myriad fluid phenomena over a wide range of scales, from those pertaining to geophysical applications in the ocean and atmosphere (see e.g.~\cite{Pedlov,Bleck02}), to much smaller ones such as those that can occur in industrial processes~\cite{Homsy, Lucas}. Under gravity, the displacement of fluid parcels from their neutral
buoyancy position often results in internal wave motion, whose mathematical modeling is an active subject of investigation, as the fundamental governing equations, even under the simplifying assumptions of ideal, incompressible fluids, are not directly solvable, and are complex enough to ``hide" some of the basic mechanisms underlying the dynamics. 

In this paper, we focus on three-dimensional settings, whereby constant density surfaces, or  isopycnals,  are viewed as graphs over two horizontal directions orthogonal to gravity, see figure~\ref{fig1}. 
In the literature, these settings have received relatively less attention than their two-dimensional counterparts~(see e.g., \cite{Ben86,CGK05,ChCa99,CMMRT09,CFOPT23}), especially from an analytical and geometrical viewpoint. Our aim is to study the fundamental mathematical structures of the governing equations, and to use these as the starting point for the derivation of simplified mathematical models in a physically relevant setup. Specifically, the fluid is viewed as inviscid and incompressible,  bound by two horizontal plates extending to infinity. Under these conditions, the equations of motion, sometimes referred to as the Euler-Boussinesq system, admit a natural Hamiltonian structure, as presented in~\cite{Bow87}. This formulation, unlike other possibilities in the literature (e.g.~\cite{ZK97,Z85,S88}), avoids the use of potentials and utilizes dependent variables constructed directly from the measurable quantities of density and velocity.  The Hamiltonian structure is arguably the preferred avenue  for investigating  conservation laws of a dynamical system, and is an excellent starting point for a systematic approach to its analysis from the perspective of Hamiltonian geometry that we shall henceforth adopt.
Although other variational formulations of the problem are available in the literature (see, e.g., \cite{CGK05,PCH,BaGaTe07})
this framework is the preferred one to  be exploited for systematic reductions to models that inherit the parent Hamiltonian structure and its conservation laws, and that can shed light on the dynamical evolution by providing 
closed-form computable solutions (e.g. traveling waves).

The following notation will be used throughout the paper:
$$
\bm{x}=
\begin{pmatrix}
x\\y\\z
\end{pmatrix}\,,
\quad
\boldsymbol{\nabla}=
\begin{pmatrix}
\partial_x \\
\partial_y\\
\partial_z
\end{pmatrix}
\equiv
\begin{pmatrix}
\nabla \\
\partial_z
\end{pmatrix}
$$
so that $\nabla$ will refer to the gradient with respect to the two-dimensional horizontal components 
$(x,y)$. The horizontal divergence will be denoted by $\nabla\cdot$, 
and the Laplacian will be understood to be the ordinary definition $\Delta=\partial_x^2+\partial_y^2+\partial_z^2$, which, when applied to functions depending on the horizontal coordinates will collapse to $\partial_x^2+\partial_y^2=\nabla\cdot \nabla=\Delta$.

Three dimensional vector variables will be represented by boldface capital letters, and their components, in lower case, by three different letters, e.g., 
$$
\bm{U}(x,y,z,t)=\begin{pmatrix}
u\\v\\w
\end{pmatrix}\,,
\qquad
\bm{\Sigma}(x,y,z,t)=\begin{pmatrix}
\sigma\\\tau\\\chi
\end{pmatrix}\,,
$$
will be the velocity field and the weighted vorticity, respectively, while the lower index $i=1,2$ will refer to the layer component, with $i=1$ for the upper layer and $i=2$ for the lower one. Values of the dependent variables restricted to the interface $z=\zeta(x,y,t)$ will be adorned by tildes, e.g., 
$$
\wit{\bm{U}}(x,y,t)=\bm{U}\big(x,y,\zeta(x,y,t),t\big)=
\begin{pmatrix}
\wit{u}\\\wit{v}\\\wit{w} 
\end{pmatrix}
\,.
$$

This paper is organized as follows: Section ~\ref{BB} presents a quick overview of the Hamiltonian structure of the 3D Euler equations with density stratification, before moving on to consider the limiting configuration of two (homogeneous) density layers and discuss its Hamiltonian reduction in Section \ref{sectHamred}. Next, Section~\ref{secWNL} focuses on the asymptotic expansion in the long wave limit of this Hamiltonian structure, and extends the Hamiltonian reduction,  ultimately to derive the so-called two-dimensional Kaup-Broer-Kupershmidt Boussinesq  (KBK-B) model governing the weakly nonlinear  (WNL) regimes of dispersive waves propagation at the interface between the two fluids. Special solution classes of this model are discussed in Section~\ref{specsol}, as well as its well-posedness features determined by the dispersion relation around simple equilibria solutions. Section~\ref{KP} continues the process of systematic Hamiltonian reduction in the presence of constraints to impose the nearly unidirectional asymptotic limit of propagation to the KBK-B model. This results in the appropriate version of the celebrated Kadomtsev-Petviashvili (KP) equation for our setup. Conclusions and perspectives on future work are presented in Section \ref{CP}. The relation between the Hamiltonian structures of the parent equations already present in the literature, details on Hamiltonian reduction process, and the connection with the Dirichlet to Neumann approach are summarized in appendices.

\section{The Benjamin-Bowman structure and its reduction}
\label{BB}
We consider a three dimensional stratified Euler fluid contained in an unbounded box 
\begin{equation}
\mathcal{B}=\barr^2\times(-h_2,h_1)\ni (x,y,z)\, ; 
\end{equation}
such a fluid is governed by the incompressible Euler equations for the velocity field $\bss{U}=(u,v,w)$ and non-constant density $\rho=\rho(x,y,z,t)$ in the presence of gravity 
$-g\bss{k}$: 
\begin{equation}
\label{EEq}
 {\rho_t}+\bss{U}\cdot\bm{\nabla}\rho=0, \qquad \bm{\nabla} \cdot \bss{U} =0, \qquad \bss{U}_t+(\bss{U}\cdot\bm{\nabla})\, \bss{U} + \frac{\bm{\nabla} p}{\rho} + g \bss{k}=\bss{0}, 
 \end{equation}
with boundary conditions 
 \begin{equation}\begin{split}
 &\bss{U}(r\to\infty,z,t)=\bss{0},\quad \rho(r\to\infty,z,t)=\rho_0(z), \\ &w(x,y,-h_2,t)=w(x,y,h_1,t)={0,}\,\\ & r=\sqrt{x^2+y^2}\,,
 \quad z\in(-h_2,h_1), \quad t\in \barr^+\,.
 \end{split}
 \label{bEEq}
\end{equation}
We remark that $z=-h_2$ and $z=h_1$ are the locations of the bottom and top confining plates, respectively, and that $\rho_0(z)$ is a reference density, which can at most depend on the vertical coordinate $z$.

Generalizing to the physical three dimensions the results in two dimensions obtained by 
Benjamin in  \cite{Ben86}, Bowman \cite{Bow87} was able to endow equations \eqref{EEq} with a non-standard Hamiltonian structure, which can be described as follows. 
The Hamiltonian variables are to be taken as the pair $(\rho, \bs{\Sigma})$, where $\bs{\Sigma}$ is the {\em weighted} vorticity vector
\begin{equation}\label{Sigmadef}
\bs{\Sigma}=\bm{\nabla}\times (\rho\,\bs{U})\, .
\end{equation}
The Hamiltonian functional is  the total energy
\begin{equation}
\label{Hbow}
\mathcal{H}=\int_\BB \left(\frac12\rho |\mb{U}|^2+g\, z(\rho-\rho_0) \right)d^3{x}\, , 
\end{equation}
and the Hamiltonian (Poisson) operator is  the operator-valued $4\times4$ matrix in the evolution equations 
\begin{equation}
\label{PBow}
\left(
\begin{array}{c}\bigskip
\rho_t\\
\Sigb_t
\end{array}
\right)=
-\left(
\begin{array}{cc}\bigskip
0&\bm{\nabla}\cdot(\rho\, \bm{\nabla}\times\>\> )\\
\bm{\nabla}\times(\rho\, \bm{\nabla}\cdot \>\>)
&\bm{\nabla}\times(\mb{\Sigma}\times \bm{\nabla}\times \> \>)
\end{array}\right)\, \left(
\begin{array}{c}\medskip
\dsl{\frac{\delta \HH}{\delta\rho}}\\
\dsl{\frac{\delta \HH}{\delta \Sigb}}
\end{array}
\right)\, , 
\end{equation}
a system that can be shown~\cite{Bow87} to be  equivalent to the incompressible variable density Euler equations \eqref{EEq}.
\begin{rem} 
A Hamiltonian formulation of the full 3D Euler equations for vanishing helicity flows was found in \cite{Z85, MS85, ZK97} in terms of Clebsch potentials. The Benjamin-Bowman formulation we started from is tailored to the incompressible regime, but has the advantage of being formulated by means of physically measurable quantities, and not restricted to zero helicity. The two formulations are connected by  a suitable ``coordinate" transformation. We further  discuss such a relation in Appendix \ref{App A}.
\end{rem} 
In the spirit of our previous papers~\cite{CFO17,CFOPT23, FS25} we now want to specialize the Hamiltonian structure~\eqref{PBow} to the case of a sharply stratified two-layered configuration, in which a homogeneous fluid of constant density $\rho_2$ lies under another fluid of constant density $\rho_1$ (with 
$\rho_2>\rho_1$ for stability). The two fluids are separated  by an  interface  $z=\zeta(x,y,t)$, i.e., $\zeta$ is a function of the horizontal variables $(x,y)$, and evolves with time $t$, a setup  
which we pictorially represent in Figure~\ref{fig1}.

\begin{figure}[t!]
\centering
\includegraphics[height=5cm]{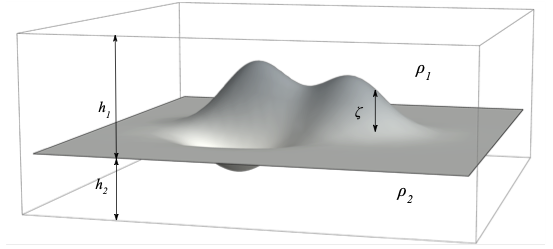}
\captionsetup{width=00.9\textwidth, font=small}
\caption{The geometry of a 3D two-layer configuration}
\label{fig1}
\end{figure}

The Euler variables are expressed via the Heaviside $\theta$-function as
\begin{equation}\label{no1}
\begin{split}
&\rho(x,z,t)=\rho_2+(\rho_1-\rho_2)\theta(z-\zeta(x,y,t))\\
&\mb{U}(x,y,z,t)=\mb{U}_2(x,y,z,t)+(\mb{U}_1(x,y,z,t)-\mb{U}_2(x,y,z,t))\, \theta(z-\zeta(x,y,t)),\\
\end{split}
\end{equation}
where $\mb{U}_1=(u_1,v_1,w_1)$ and $ \mb{U}_2=(u_2,v_2,w_2)$ denote the velocity vector fields in the two domains, while the reference far-field density
is
\begin{equation}\
\label{rho03D}
\rho_0(z)=\rho_2+(\rho_1-\rho_2)\theta(z),\quad z\in(-h_2,h_1)\, .
\end{equation}
Assuming the flow to be potential in the bulk of both domains, i.e., 
\begin{equation}\label{pot}
\mb{U}_i=\bm{\nabla}\bs{\Phi}_i\, ,\quad i=1,2,
\end{equation}
the characteristic variable { $\Sigb$} is a ``vortex sheet'' supported on the interface  
given by
\begin{equation}
\label{Sigmasheet}
\Sigb(x,y,z;t)=\delta(z-\zeta)\, 
\left(
\begin{array}{c}
\rho_2 v_2-\rho_1v_1+\zeta_y(\rho_2 w_2-\rho_1w_1)\\
-\left(\rho_2 u_2-\rho_1u_1+\zeta_x(\rho_2 w_2-\rho_1w_1)\right)\\
\zeta_x(\rho_2 v_2-\rho_1v_1)-\zeta_y(\rho_2 u_2-\rho_1u_1)
\end{array}
\right)
=\delta(z-\zeta)\, \left(
\begin{array}{c}
\wit{\sigma}(x,y,t)\\  \wit{\tau}(x,y,t)\\ \wit{\chi}(x,y,t)
\end{array}
\right),
\end{equation}
where we used the Dirac $\delta$-function distribution, and we have defined the sheet variables according to the notation introduced in~\S\ref{intro}. 
Equation \eqref{Sigmasheet} leads to the following
\begin{prop}\label{propsigma}
The interface vorticity $\wit{\bm\Sigma}=(\wit \sigma, \wit \tau, \wit \chi)$ satisfies the following properties
\begin{equation}
\wit{\chi}=\zeta_x\wit{\sigma}+\zeta_y\wit{\tau}\,,
\qquad
\nabla\cdot
\begin{pmatrix}
\wit{\sigma}\\
\wit{\tau}
\end{pmatrix}
=
\wit{\sigma}_{x}+\wit{\tau}_{y}=0 \,.
\label{propsig}
\end{equation}
\end{prop}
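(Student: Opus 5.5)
The first identity is algebraic and follows from the explicit form \eqref{Sigmasheet}. Write $A=\rho_2 u_2-\rho_1u_1$, $B=\rho_2 v_2-\rho_1v_1$, $C=\rho_2 w_2-\rho_1w_1$, all evaluated at $z=\zeta$. Then $\wit\sigma=B+\zeta_yC$, $\wit\tau=-(A+\zeta_xC)$ and $\wit\chi=\zeta_xB-\zeta_yA$. Substituting, $\zeta_x\wit\sigma+\zeta_y\wit\tau=\zeta_xB+\zeta_x\zeta_yC-\zeta_yA-\zeta_x\zeta_yC=\wit\chi$, because the $C$ terms cancel. Geometrically this says that $\wit{\Sigb}$ is tangent to the interface: it is orthogonal to the normal $(-\zeta_x,-\zeta_y,1)$.

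For the divergence identity I would use the potential assumption \eqref{pot}. Set $\wit\Phi_i(x,y,t)=\Phi_i(x,y,\zeta,t)$. By the chain rule,
\[
\partial_x\wit\Phi_i=\wit u_i+\zeta_x\wit w_i,\qquad \partial_y\wit\Phi_i=\wit v_i+\zeta_y\wit w_i .
\]
Hence $A+\zeta_xC=\partial_x\Psi$ and $B+\zeta_yC=\partial_y\Psi$, where $\Psi=\rho_2\wit\Phi_2-\rho_1\wit\Phi_1$. It follows that $\wit\sigma=\Psi_y$ and $\wit\tau=-\Psi_x$, so $\wit\sigma_x+\wit\tau_y=\Psi_{yx}-\Psi_{xy}=0$. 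In other words, $(\wit\sigma,\wit\tau)$ is a horizontal perpendicular gradient, i.e.\ it has a stream function.

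An alternative that does not use potentiality is to start from $\bna\cdot\Sigb=\bna\cdot\bna\times(\rho\bs U)=0$ in the distributional sense. With $\Sigb=\delta(z-\zeta)\wit{\Sigb}$, one computes
\[
\bna\cdot\Sigb=\delta(z-\zeta)\,(\wit\sigma_x+\wit\tau_y)+\delta'(z-\zeta)\,(\wit\chi-\zeta_x\wit\sigma-\zeta_y\wit\tau).
\]
The two coefficients must vanish separately, and this yields both identities at once. I would mention this as a consistency check, but rely on the chain-rule computation as the main argument.

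The only delicate point is that the derivatives of interface-restricted quantities must be taken along the surface, so the $\zeta_x w$ and $\zeta_y w$ terms have to be tracked. These terms are exactly what turn the sheet components into tangential derivatives of $\Psi$. Once the chain rule is applied correctly, everything else is immediate.
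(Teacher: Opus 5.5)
Your proof is correct. The first identity is handled as in the paper, but your main argument for the divergence identity is not the paper's.

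The paper uses what you relegate to a consistency check. Since $\Sigb=\bna\times(\rho\,\bs{U})$, one has $\bna\cdot\Sigb=0$ distributionally. Inserting \eqref{Sigmasheet} and separating the coefficients of $\delta'(z-\zeta)$ and $\delta(z-\zeta)$ then gives the first and second identity respectively. The paper keeps bulk extensions of the velocities inside these coefficients, so it has to recombine some $\partial_z$ terms. Your choice of $(x,y)$-dependent coefficients makes the $\delta$ coefficient literally $\wit\sigma_x+\wit\tau_y$.

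Your chain-rule argument, which is the same computation as \eqref{gradpsi}, instead produces an explicit stream function $\Psi=\rho_2\wit\Phi_2-\rho_1\wit\Phi_1$, namely the variable $\xi$ of Appendix~\ref{App C}. This gives the stream function of \eqref{2dstreamfunction} globally rather than only locally. It also shows at once that $(-\wit\tau,\wit\sigma)=\nabla\Psi$, which is the zero-curl property \eqref{0curv} used later. The paper's route, in turn, stays within the Benjamin--Bowman variables $(\rho,\Sigb)$ on which the reduction is performed, and reads both constraints off a single structural identity.

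One correction to your side remark: the distributional argument does not avoid potentiality. It needs $\Sigb$ to be a pure sheet $\delta(z-\zeta)\wit{\Sigb}$, which is what bulk irrotationality \eqref{pot} provides. With vorticity in the layers, $\wit\sigma_x+\wit\tau_y$ would equal, up to sign, the jump across the interface of the component of $\rho_i\bna\times\bs{U}_i$ along $(-\zeta_x,-\zeta_y,1)$, which need not vanish.
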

{\bf Proof}. The first property follows directly from \eqref{Sigmasheet}. 

For the second, we recall that by its very definition \eqref{Sigmadef}, $\bm{\nabla}\cdot\Sigb=0$. 
This yields
\begin{equation}\label{eq1}
\begin{split}
&\left[\partial_x(\rho_2 v_2-\rho_1v_1+\zeta_y(\rho_2 w_2-\rho_1w_1))-\partial_y(\rho_2 u_2-\rho_1u_1+\zeta_x(\rho_2 w_2-\rho_1w_1))\right.\\&\left.
+\partial_z(\zeta_x(\rho_2 v_2-\rho_1v_1)-\zeta_y(\rho_2 u_2-\rho_1u_1))\right]\delta(z-\zeta)+\\ &
\left[
-\zeta_x(\rho_2 v_2-\rho_1v_1+\zeta_y(\rho_2 w_2-\rho_1w_1))+\zeta_y (\rho_2 u_2-\rho_1u_1+\zeta_x(\rho_2 w_2-\rho_1w_1))
 \right.
\\& \left. +\zeta_x(\rho_2 v_2-\rho_1v_1)-\zeta_y(\rho_2 u_2-\rho_1u_1)
\right] 
\delta^\prime(z-\zeta) =0\,.
\end{split}
\end{equation}
The vanishing of the $\delta^\prime(z-\zeta)$ coefficient is just a restatement of the first property in~(\ref{propsig}).
By straightforward algebraic manipulations, and taking into account that $\partial_z\zeta=0$,   the $\delta(z-\zeta)$ coefficient of \eqref{eq1} turns into
\begin{equation}\label{e1.5}
\begin{split}
&\left[\partial_x(\rho_2 v_2-\rho_1v_1+\zeta_y(\rho_2 w_2-\rho_1w_1))+\zeta_x\partial_z(\rho_2 v_2-\rho_1v_1)+\right.\\
&\left.
-\partial_y(\rho_2 u_2-\rho_1u_1+\zeta_x(\rho_2 w_2-\rho_1w_1))+\zeta_y\partial_z((\rho_2 u_2-\rho_1u_1))
\right]
\end{split}
\end{equation}
Adding and subtracting the term  $\zeta_x\zeta_y(\rho_2 w_2-\rho_1w_1)$ and judiciously collecting terms we finally get 
\begin{equation}\label{relazdiv}
\wit{\sigma}_{x}+\wit{\tau}_{y}=0\, ,
\end{equation}
that is, the second property in~(\ref{propsig}). \hfill$\square$

In summary, the assumption of a sharp 2-layer stratification and of the absence of bulk vorticity leaves us with a weighted vorticity sheet  
$\wit{\bm\Sigma}$ satisfying 
the ``constraints" of Proposition \ref{propsigma}, i.e., with effective 2D variables $(\zeta, \wit{\sigma},\wit{\tau})$. In the next section we shall reduce the Benjamin-Bowman Hamiltonian operator given by \eqref{PBow} to the ``manifold"  parametrized by these effective variables.
\begin{rem}
It should be remembered that $(\wit{{\sigma}}, \wit{\tau})$  is evaluated at points on the interface. Hence the equation $\wit{\sigma}_x+\wit{\tau}_y=0$ is to be carefully  interpreted. As usual, this equation locally implies the existence of an interface-stream function $\wit{\psi}(x,y,t)$ such that 
\begin{equation}\label{2dstreamfunction}
\wit{\sigma}=\wit{\psi}_y \,, \qquad 
\wit{\tau}=-\wit{\psi}_x\, .
\end{equation}
In what follows, from Section \ref{specsol} onwards, we will make extensive use the fact that the orthogonal $2$-vector $(-\wit{\tau},\wit{\sigma})$ is a potential field.
\end{rem}

\subsection{The Hamiltonian reduction}\label{sectHamred}
As we already discussed, the incompressible non-homogeneous Euler equations for fluids in the box $\BB$ are defined on the manifold (dropping $t$-dependence of the variables for ease of notation)
\begin{equation}\label{M-mfld}
\CM=\left\{(\rho(x,y,z), \Sigb(x,y,z))\right\},\quad (x,y)\in \barr^2, z\in(-h_2,h_1)\, .
\end{equation}
This manifold is endowed with the Poisson tensor
\begin{equation}\label{PBB}
{\cal P}_B=-\left(
\begin{array}{cc}\bigskip
0&\bm{\nabla}\cdot(\rho\, \bm{\nabla}\times\>\> )\\
\bm{\nabla}\times(\rho\, \bm{\nabla}\cdot \>\>)
&\bm{\nabla}\times(\mb{\Sigma}\times \bm{\nabla}\times \> \>)
\end{array}\right)\, .
\end{equation}
We consider the submanifold of two-layer configurations
\begin{equation}\label{S/mfld}
\begin{split}
\CS=&\left\{\rho=\rho_2+(\rho_1-\rho_2)\theta(z-\zeta(x,y)), \quad
\sigma =\wit{\sigma} (x,y)\delta(z-\zeta(x,y)), \right.
\\ & \left.
\quad \tau =\wit{\tau} (x,y)\delta(z-\zeta(x,y)), 
\quad \chi =\wit{\chi} (x,y)\delta(z-\zeta(x,y)),
 \quad \wit{\chi}=\zeta_x\wit{\sigma}+\zeta_y\wit{\tau}\, 
\right\}.
\end{split} 
\end{equation}
In order to equip $\mathcal{S}$ with a natural Hamiltonian structure, we use one of the specializations of the Marsden-Ratiu (MR) Hamiltonian reduction theorem~\cite{MR86}, considering the reduction with respect to the distribution $\mathcal{D}$ defined as
\begin{equation}\label{DD}
\mathcal{D}= 
{\cal P}_B\left((T\CS)^0\right)
\,, 
\end{equation}
where $(T\CS)^0$ is the annihilator of the tangent space $T\CS\subset T\CM$ (this approach parallels what has successfully been applied in the $2D$ case in \cite{CFOPT23, FS25}).
Then we can obtain a Poisson tensor on the quotient $\CS/(P_B(T\CS)^0)$ by means of the projection
\begin{equation}\label{projmap}
\zeta=\dsl{\frac1{\rho_2-\rho_1}}\int_{-h_2}^{h_1}(\rho-\rho_1)\, dz -h_2,\,\quad \wit{\sigma}=\int_{-h_2}^{h_1}{\sigma}\, dz, \, \quad \wit{\tau}=\int_{-h_2}^{h_1}\tau \, dz\, .
\end{equation}
We refer to Appendix \ref{App B} for details. Note the empty intersection 
\begin{equation}
\label{S=N}
{\cal P}_B\left((T\CS)^0\right)\cap T\CS=\{0\}\, , 
\end{equation}
so that the MR reduced Poisson manifold is indeed identical to $\CS$, and thus parametrized by the triple $({\zeta, \wit{\sigma},\wit{\tau}})$. 
In these coordinates the reduced Poisson tensor is given by the matrix of differential operators
\begin{equation}
\label{Pred}
{\cal P}=\left(
\begin{array}{ccc}
 0 & - \partial_y  &\partial_x   \\
 - \partial_y  &0   &   0\\
 \partial_  x&  0 &   0
\end{array}
\right).
\end{equation}
In summary, the geometric Poisson reduction procedure endows the phase space of $2$-layered configurations with a simple Poisson tensor, independent of the density of the two layers as well as other hardware parameters of the model, {leading to the Hamilton equations of motion 
\begin{equation}\label{eqn_WNL0}
    \begin{cases}\medskip 
        &\zeta_t = -\partial_y \dfrac{\delta\mathcal{H}}{\delta\wit{\sigma}}+\partial_x\dfrac{\delta\mathcal{H}}{\delta\wit{\tau}}\\ \medskip
          &\wit{\sigma}_{t} =-\partial_y \dfrac{\delta\mathcal{H}}{\delta\zeta}\\
        &\wit{\tau}_{t} = \partial_x\dfrac{\delta\mathcal{H}}{\delta\zeta}\ , 
    \end{cases}
\end{equation}  
supplemented by the condition $ \wit{\sigma}_{x}+\wit{\tau}_{y}=0$,  where $\mathcal{H}$ is the energy~\eqref{Hbow}.

}
The next step is to express the Hamiltonian of such configurations in terms of the Hamiltonian variables  $({\zeta, \wit{\sigma},\wit{\tau}})$, a task we face in Section~\ref{secWNL}. We can anticipate that in the 3-dimensional setting, this process is quite more involved than in its $2$-dimensional counterpart. However, restricting to the lowest non-trivial order in a natural asymptotic expansion yields  a {\em local\/} Hamiltonian density, leading to the motion equations  
\eqref{eqn_WNL}.

\section{The Weakly-Non Linear  asymptotics and the KBK-Boussinesq  2D model}
\label{secWNL}

In this section we shall first tackle the problem of computing the Hamiltonian of the reduced $2$-layer system whose (simple) Poisson structure was determined in Section \ref{sectHamred}, in terms of the free variables $({\zeta, \wit{\sigma},\wit{\tau}})$. Next, we shall restrict ourselves to the simplest dispersive asymptotic model, which we will denote as the Weakly Non-Linear (WNL) system.

\subsection{Asymptotic expansion of the energy}\label{Asy_ener}
The assumption of  bulk irrotationality of the fluid flow allows us to introduce the bulk velocity potentials $\Phi_i(x,y,z)$. The Taylor series expansions with respect to the vertical variable $z$  of these potentials at the plate locations
$z_i\equiv (-1)^{i-1} h_i$ are
\begin{equation}\label{taylor_Phi}
    \Phi_i(x,y,z) = \sum_{n=0}^{\infty} \frac{(z+(-1)^{i}h_i)^n}{n!}{\partial^n \Phi_i \over \partial z^n}(x,y,z_i)\, .
\end{equation}
Every odd term of these series is zero,
due to incompressibility, i.e., $\Delta\Phi_i = 0 $, and to the kinematic boundary conditions at the bottom and at the top of the channel $\partial_z \Phi_i(x,y,z_i)=0$ (see, e.g., \cite{Wu00}). 
Thus, with $n =2m$ 
\begin{equation}
    \partial_z^{2m} \Phi_{i}(x,y,z_i) = (-1)^m (\partial_x^{2}+\partial_y^{2})^m\Phi_{i \, 0} \,   ,
\end{equation}
where we denoted by $\Phi_{i\,0}$ the bulk potentials  at the boundary, i.e., $\Phi_{i\, 0}=\Phi_i(x,y,z_i)$, 
and~(\ref{taylor_Phi}) reduces to 
\begin{equation}
     \Phi_i(x,y,z) = \sum_{n=0}^{\infty}  \frac{(-1)^n}{(2n)!}(z+(-1)^{i}h_i)^{2n}(\partial_x^{2}+\partial_y^{2})^n\Phi_{i \, 0}\,  .
\end{equation}
The corresponding Taylor expansions of the velocity components are 
\begin{equation}\label{taylor_u}
    \begin{split}
        u_i(x,y,z) &= \sum_{n=0}^{\infty}  \frac{(-1)^n}{(2n)!}(z+(-1)^{i}h_i)^{2n}(\partial_x^{2}+\partial_y^{2})^n u_{i \, 0}\, , \\
        v_i(x,y,z) &= \sum_{n=0}^{\infty}  \frac{(-1)^n}{(2n)!}(z+(-1)^{i}h_i)^{2n}(\partial_x^{2}+\partial_y^{2})^n v_{i \, 0}\, , \\
      { w_i(x,y,z)} &= \sum_{n=0}^{\infty} \frac{(-1)^{n+1}}{(2n+1)!}(z+(-1)^{i}h_i)^{2n+1}(\partial_x^{2}+\partial_y^{2})^n (\partial_x u_{i \, 0}+\partial_y v_{i \, 0})\, ,
    \end{split}
\end{equation}
and their counterpart for the interface velocities $\wit{\textbf{U}}(x,y) = \textbf{U}(x,y,\zeta(x,y))$ are
\begin{equation} \label{taylor_utilde}
    \begin{split}
        \wit{u}_i(x,y) &= \sum_{n=0}^{\infty}  \frac{(-1)^n}{(2n)!}\eta_i^{2n}(\partial_x^{2}+\partial_y^{2})^n u_{i \, 0}\, , \\
        \wit{v}_i(x,y) &= \sum_{n=0}^{\infty}  \frac{(-1)^n}{(2n)!}\eta_i^{2n}(\partial_x^{2}+\partial_y^{2})^n v_{i \, 0}\, , \\
        \wit{w}_i(x,y) &=(-1)^i \sum_{n=0}^{\infty} \frac{(-1)^{n+1}}{(2n+1)!}\eta_i^{2n+1}(\partial_x^{2}+\partial_y^{2})^n (\partial_x u_{i \, 0}+\partial_y v_{i \, 0})\, , 
    \end{split}
\end{equation}
since
\begin{equation}
    \eta_1 = h_1-\zeta, \quad \eta_2=h_2+\zeta\,.
\end{equation}
The power series \eqref{taylor_u} and \eqref{taylor_utilde} can be used as the starting point of  long-wave asymptotics, by introducing rescaled nondimensional independent variables, 
  \begin{equation}
      x = Lx^*, \quad y =Ly^*, \quad z = {h}z^* \,.
  \end{equation}
Here  $L$ is assumed to be the typical horizontal wavelength of the dynamics, 
 and $h=h_1+h_2$ offers a natural vertical scale. 
Long wave asymptotics is defined assuming that the parameter $\epsilon ={{h}}/{L}$ is small. This can also be viewed as the weight of dispersion corrections to the well known shallow water wave theory. Moreover, the maximum displacement $a$ from equilibrium can be used to nondimensionalize  the smooth interface 
$\zeta(x,y)$,  
 \begin{equation}
      \zeta = a \,\zeta^*\,.
  \end{equation}
This defines the parameter $\alpha = {a}/{{h}}$ whose smallness ultimately characterizes the Weakly Nonlinear (WNL) asymptotics to be discussed next.
  
Unless otherwise explicitly stated, from now on we will use non-dimensional variables and drop asterisks, but, when needed,   
with a slight abuse of notation the order symbol $O(\cdot)$ will also denote magnitude of bounded {\it dimensional} quantities whenever this can be done without generating confusion. 
 With the above scalings, the formal power series \eqref{taylor_u} and \eqref{taylor_utilde} become
\begin{equation}\label{taylor_u_eps}
    \begin{split}
        u_i(x,y,z) &= \sum_{n=0}^{\infty}  \frac{(-1)^n}{(2n)!}\eps^{2n}\left(z+(-1)^{i}\frac{h_i}{{h}}\right)^{2n}(\partial_x^{2}+\partial_y^{2})^n u_{i \, 0}\, , \\
        v_i(x,y,z) &= \sum_{n=0}^{\infty}  \frac{(-1)^n}{(2n)!}\eps^{2n}\left(z+(-1)^{i}\frac{h_i}{{h}}\right)^{2n}(\partial_x^{2}+\partial_y^{2})^n v_{i \, 0}\, , \\
        w_i(x,y,z) &= \eps\sum_{n=0}^{\infty} \frac{(-1)^{n+1}}{(2n+1)!}\eps^{2n}\left(z+(-1)^{i}\frac{h_i}{{h}}\right)^{2n+1}(\partial_x^{2}+\partial_y^{2})^n (\partial_x u_{i \, 0}+\partial_y v_{i \, 0})\, , 
    \end{split}
\end{equation}
and \begin{equation}\label{taylor_exp_tilde}
\begin{split}
       \wit{u}_i(x,y) &= \sum_{n=0}^{\infty}  \frac{(-1)^n}{(2n)!}\eps^{2n}\eta_i^{2n}(\partial_x^{2}+\partial_y^{2})^n u_{i \, 0}\, , \\
        \wit{v}_i(x,y) &= \sum_{n=0}^{\infty}  \frac{(-1)^n}{(2n)!}\eps^{2n}\eta_i^{2n}(\partial_x^{2}+\partial_y^{2})^n v_{i \, 0}\, , \\
        \wit{w}_i(x,y) &=(-1)^i \eps \sum_{n=0}^{\infty} \frac{(-1)^{n+1}}{(2n+1)!}\eps^{2n} \eta_i^{2n+1}(\partial_x^{2}+\partial_y^{2})^n (\partial_x u_{i \, 0}+\partial_y v_{i \, 0})\, , 
    \end{split}
    \end{equation}
    where the nondimensional thicknesses are now
    \begin{equation}\label{nondth}
        \eta_1 = \frac{h_1}{{h}}-\alpha \zeta, \quad \eta_2 = \frac{h_2}{{h}}+\alpha \zeta\, .
    \end{equation}
    For the next step we need the following relations 
     \begin{equation}\label{uovotoutilde}
    \begin{split}
        u_{i\, 0}(x,y) &=  \wit{u_i} + \frac{\eps^2}{2}\eta_i^2 \Delta \wit{u_i} +O(\eps^4) \,,\\ 
         v_{i\, 0}(x,y) &=  \wit{v_i} + \frac{\eps^2}{2}\eta_i^2 \Delta \wit{v_i} +O(\eps^4)  \, .
        \end{split}
    \end{equation}
    \medskip
In general, the inversion of such near identity operators leads to the asymptotic formula, 
    \begin{equation}\label{diffop_inv}
         ({\bf 1}+\epsilon^2 {\bf D})^{-1}= {\bf 1}-\epsilon^2 {\bf{D}}+O(\eps^4) . 
    \end{equation}
In the following section we will also need the asymptotic expansions of the layer-averaged horizontal velocities, $\bar{u}_i(x,y),\, \bar{v}_i(x,y)$ defined as 
    \begin{equation}
        \bar{u}_i(x,y):= \frac{1}{\eta_i}\int_{[\eta_i]} u_i(x,y,z)dz\,, \quad \bar{v}_i(x,y):=\frac{1}{\eta_i}\int_{[\eta_i]} v_i(x,y,z)dz\,\,
    \end{equation}
where definite integral notation stands for integration over $z$ in each layer $i=1,2$.   
By the Taylor expansions \eqref{taylor_u_eps}, these are 
    \begin{equation}\label{ubar}
    \begin{split}
        \bar{u}_i(x,y)&= \sum_{n=0}^{\infty}  \frac{(-1)^n}{(2n+1)!}\eps^{2n}\eta_i^{2n}(\partial_x^{2}+\partial_y^{2})^n u_{i \, 0}\, , \\ \bar{v}_i(x,y)&=\sum_{n=0}^{\infty}  \frac{(-1)^n}{(2n+1)!}\eps^{2n}\eta_i^{2n}(\partial_x^{2}+\partial_y^{2})^n v_{i \, 0}\, ,
        \end{split}
    \end{equation}
which can be used to provide their expression in terms of the interfacial variables $\wit{u}_i, \wit{v}_i$. 

Indeed, 
at the lowest order we have
\begin{equation} 
\begin{split}
u_2(x,y,z)&=u_{2\,0} -\frac{\eps^2}{2} \Big(z+\frac{h_2}{{h}}\Big)^2 \Delta u_{2\,0} +O(\eps^4)\, ,\\
v_2(x,y,z)&=v_{2\,0} -\frac{\eps^2}{2} \Big(z+\frac{h_2}{{h}}\Big)^2 \Delta v_{2\,0} +O(\eps^4)\, , 
\end{split}
\end{equation}
whence, by~(\ref{taylor_exp_tilde}) we get 
\begin{equation} 
\begin{split}
u_2(x,y,z)&=\wit{u}_2+\frac{\eps^2}{2}\left(\eta_2^2-(z+\frac{h_2}{{h}})^2\right)\Delta\wit{u}_{2} +O(\eps^4)\,, \\
v_2 (x,y,z) &=\wit{v}_2+\frac{\eps^2}{2}\left(\eta_2^2-(z+\frac{h_2}{{h}})^2\right)\Delta\wit{v}_{2} +O(\eps^4)\, ,
\end{split}
\end{equation}
and, for the vertical velocity,
\begin{equation}
w_2(x,y,z)=-\epsilon \left(z+\frac{h_2}{{h}}\right) \nabla \cdot \begin{pmatrix}
    \wit{u}_{2}\\
    \wit{v}_{2}
\end{pmatrix} +O(\eps^3)\, ,
\label{vertvel2}
\end{equation}
and similar scaling relations holding for the upper layer  velocities.
For the layer averaged velocities, the above relations imply
\begin{equation}\label{tildetobar2}\begin{split}
\ou_i&=\wit{u}_i+\frac{\eps^2}{3}\frac{\eta_i^2}{{h^2}}\Delta \wit{u_i}+O(\eps^4)=\wit{u}_i+ \frac{\eps^2}{3}\frac{h_i^2}{{h^2}}\Delta \wit{u_i} +O(\alpha\eps^2,\eps^4), \\ 
\bar{v}_i&=\wit{v}_i+\frac{\eps^2}{3}\frac{h_i^2}{h^2}\Delta \wit{v_i}+O(\eps^4)\ = \wit{v}_i+\frac{\eps^2}{3}\frac{h_i^2}{h^2}\Delta \wit{v_i}+O(\alpha\eps^2,\eps^4)\, .
\end{split}
\end{equation}
Such asymptotic formulas will be widely used in what follows.

   \subsection{The energy }
    Our next task is to write the explicit form of the energy in terms of horizontal interfacial 
    velocities~$(\wit{u}_i, \wit{v}_i)$. While we can do this at various levels of approximations, for the time being we consider terms of order $O(\eps^2)$, discard terms of order $O(\eps^4)$ or higher, and retain all terms of every order in the parameter $\alpha$. 
All the asymptotic manipulations are needed solely for the kinetic energy,  the potential energy being  straightforwardly computed.  

Let us illustrate the above strategy with the lower fluid. Its kinetic energy density reads
\begin{equation}
\label{T2-0} 
T_2=\frac{\rho_2}{2}\, \int_{-{h_2}/{{h}}} ^{\alpha\zeta}(u_2^2+ v_2^2+ w_2^2)\, {h}\,dz\, ,
\end{equation}
the dimensional factor ${h}$ coming from the scaling of {the physical vertical coordinate} $z$.
This leads to 
\begin{equation}
\label{T2}
\begin{split} 
T_2&=\frac{{h}\, \rho_2}{2}\int_{-\frac{h_2}{{h}}}^{\alpha\zeta} 
\left\{ 
\wit{u}_2^2+\wit{v}_2^2+\epsilon^2
\left[ 
2
\left( 
\wit{u}_2\Delta \wit{u}_{2}+\wit{v}_2\Delta \wit{v}_{2}
\right)
\left(
\eta_2^2-
\left(z+\frac{h_2}{{h}}
\right)^2
\right)
+ \right.\right.\\ &\left.\left.
+\left(
 \wit{u}_{2,x}+\wit{v}_{2,y}
\right)^2
\,
\left(
z+\frac{h_2}{{h}}
\right)^2
\right]
+O(\eps^4)
\right\} 
dz \\& =
\frac{{h}\, \rho_2}{2}\left[\eta_2(\wit{u}_2^2+ \wit{v}_2^2)+\frac{2\eps^2}{3} \eta_2^3 
\left(
\wit{u}_2\Delta\wit{u}_{2}+\wit{v}_2\Delta\wit{v}_{2}+{\frac{1}{2}}
\left(
 \wit{u}_{2\,x}+
    \wit{v}_{2\,y}
\right)^2
\right) 
\right]+O(\eps^4) \, .
\end{split}
\end{equation}
By the same arguments we obtain the contribution to the total kinetic energy density of the upper fluid as
\begin{equation}\label{T1}
T_1= \frac{{h}\, \rho_1}{2}\left[\eta_1(\wit{u}_1^2+ \wit{v}_1^2)+\frac{2\eps^2}{3} \eta_1^3 
\left(
\wit{u}_1\Delta\wit{u}_{1}+\wit{v}_1\Delta\wit{v}_{1}+{\frac{1}{2}}
\left(
  \wit{u}_{1\,x}+
    \wit{v}_{1\,y}
\right)^2
\right) 
\right]+O(\eps^4) \, .
\end{equation}
\subsection{The constraint}\label{DN_op}
As well known (see, e.g. \cite{c_lever97}), in three dimensions  the kinematic boundary conditions at the interface can be expressed as 
\begin{equation}\label{bd_cond}
    \nabla \cdot \left(\eta_1(x,y)\begin{pmatrix}
        \bar{u}_1(x,y)\\
        \bar{v}_1(x,y)
    \end{pmatrix}+\eta_2(x,y)\begin{pmatrix}
        \bar{u}_2(x,y)\\
        \bar{v}_2(x,y)
    \end{pmatrix}\right) = 0\, ,
\end{equation}
where $(\bar{u}_i, \bar{v}_i)$ are 
the layer-averaged horizontal velocities. This implies the existence of a ``mass-stream"-function ${\mu}(x,y)$ such that
\begin{equation}\label{costr_3d}
    \eta_1(x,y)\begin{pmatrix}
        \bar{u}_1(x,y)\\
        \bar{v}_1(x,y)
\end{pmatrix}+\eta_2(x,y)\begin{pmatrix}
        \bar{u}_2(x,y)\\
        \bar{v}_2(x,y)\\
    \end{pmatrix} = \begin{pmatrix}\mu_y\\-\mu_x\end{pmatrix}.
\end{equation}
Since in general the mass-stream function cannot be assumed to vanish, 
a careful asymptotic analysis is needed to arrive at a workable form of the constraint (\ref{costr_3d}), by fixing suitable scaling relations between the long-wave small parameter $\epsilon$ and the non-linearity small parameter $\alpha$.

Looking at the $3D$ velocity potentials $\Phi_i(x,y,z)$ and their interfacial values
\begin{equation}\label{Ztrace}
\wit{\Phi}_i(x,y)=\Phi_i(x,y,z)\big\vert_{z=\zeta(x,y)}\, , 
\end{equation}
we have, e.g., 
\begin{equation}\label{gradpsi}
\wit{\Phi}_{1,x}=\partial_x(\Phi_1(x,y,z))\big\vert_{z=\zeta(x,y)}+\partial_z(\Phi_1(x,y,z))\big\vert_{z=\zeta(x,y)}\zeta_x=\wit{u}_1+\wit{w}_1\zeta_x.
\end{equation}
Since the slope of the normalized interface is small and scales as $O(\alpha\, \epsilon)$, and 
by the Taylor expansion \eqref{taylor_u_eps} the vertical velocities are of order $\epsilon$, in the WNL expansion we get $\wit{u}_1=\wit{\Phi}_{1,x}+O(\alpha \eps^2)$. 
By the same argument we have
\begin{equation}\label{utildepot}
\begin{pmatrix}
\wit{u}_i\\
\wit{v}_i
\end{pmatrix}=\nabla \wit{\Phi}_i+O(\alpha\eps^2).
\end{equation}
By using  relation \eqref{tildetobar2} between averaged and interface velocities together with \eqref{utildepot} above, we can write the mass-balance equations~\eqref{costr_3d} as the system
\begin{equation}\label{co_3Dpsi}
\left\{
\begin{array}{l}\medskip
\eta_1\wit{\Phi}_{1\,x}+\eta_2\wit{\Phi}_{2\,x}+\dsl{\frac{\epsilon^2}{3\, {h}^3}}\left(h_1^3 (\wit{\Phi}_{1\, xxx}+\wit{\Phi}_{1\, yyx})+h_2^3 (\wit{\Phi}_{2\, xxx}+\wit{\Phi}_{2\, yyx})\right)=\mu_y+O(\eps^4, \alpha \eps^2)\\
\eta_1\wit{\Phi}_{1\,y}+\eta_2\wit{\Phi}_{2\,y}+\dsl{\frac{\epsilon^2}{3\, {h}^3}}\left(h_1^3 (\wit{\Phi}_{1\, xxy}+\wit{\Phi}_{1\, yyy})+h_2^3 (\wit{\Phi}_{2\, xxy}+\wit{\Phi}_{2\, yyy}) \right)=-\mu_x+O(\eps^4,\alpha\eps^2)\,.
\end{array}
\right.
\end{equation}
Cross differentiating and taking the difference we get that the mass-stream function satisfies the Poisson equation
\begin{equation}\label{constrfin}
\begin{split}
\Delta\mu=&\alpha (-\zeta_{y}\wit{\Phi}_{1\,x}+\zeta_{x}\wit{\Phi}_{1\,y}+\zeta_{y}\wit{\Phi}_{2\,x}-\zeta_{x}\wit{\Phi}_{2,y})+O(\eps^4,\alpha\eps^2)\\  =&
-  \alpha(\nabla \zeta \times \nabla (\wit\Phi_2-\wit\Phi_1))+O(\eps^4,\alpha\eps^2)\,, 
\end{split}
\end{equation}
where we considered the small amplitude parameter $\alpha$ entering the definition of the normalized interface $\zeta$ (see \eqref{nondth}). By restricting the model in the WNL asymptotics  $ \alpha \ll \epsilon$, i.e., considering terms of order $\alpha, \epsilon^2$ and ignoring terms of order $\alpha^2, \alpha \epsilon^2, \epsilon^4$,
we can trade the general constraint equation \eqref{co_3Dpsi} for the WNL constraint system:
\begin{equation}\label{cofin_3d}
\left\{
\begin{array}{l}\medskip
\eta_1\wit{u}_1+\eta_2\wit{u}_2+\dsl{\frac{\epsilon^2}{3\, {h}^3}}\left(h_1^3 \Delta \wit{u}_1+h_2^3 \Delta \wit{u}_2\right)=
- \alpha \, \partial_y \Delta^{-1}\left(\nabla \zeta \times \begin{pmatrix}
\wit{u}_2-\wit{u}_1\\
\wit{v}_2-\wit{v}_1
\end{pmatrix}\right) +O(\eps^4,\alpha\eps^2) \\
\eta_1\wit{v}_1+\eta_2\wit{v}_2+\dsl{\frac{\epsilon^2}{3\, {h}^3}}\left(h_1^3 \Delta \wit{v}_1+h_2^3 \Delta \wit{v}_2\right)=
\alpha \, \partial_x \Delta^{-1}\left(\nabla \zeta \times \begin{pmatrix}
\wit{u}_2-\wit{u}_1\\
\wit{v}_2-\wit{v}_1
\end{pmatrix}\right)+O(\eps^4,\alpha\eps^2) \,.
\end{array}
\right.
\end{equation}
This constraint can be manipulated to give explicit relations between the velocities of the two layers, better written in the vector form
\begin{equation} \label{u2u1}
\begin{split}
\wit{\mb{u}}_2 = \dsl{-\frac{h_1}{h_2}\wit{\mb{u}}_1} &\dsl{-\frac{\eps^2}{3} \frac{h_1}{h_2}\,{h_1^2-h_2^2 \over h^2}
\Delta\wit{\mb{u}}_1+}\\ \medskip&\dsl{+\alpha  h\, \frac{h_1}{h_2} 
\left(
\frac{1}{h_1}+\frac{1}{h_2}
\right)\left(\zeta\,\wit{\mb{u}}_1+  \, \bm{\nabla} \times\Delta^{-1}(\bm{\nabla} \zeta \times \wit{\mb{u}}_1)\right)} +O(\alpha^2,\eps^4,\alpha\eps^2)\,,
\end{split}
\end{equation}
where $\wit{\mb{u}}_i=\left( \begin{array}{l} \wit{u}_i\\ \wit{v_i}\\
0\end{array}\right)$.
\begin{rem} Contrary to the $2$-dimensional case, relation \eqref{u2u1} introduces non-localities in the reduced theory, which  parallel the ones obtained in the Dirichlet-Neumann (DN) operator approach of \cite{BSL08} for $3$-dimensional $2$-layer configurations with a rigid lid.
As we will explicitly show in the next section, although the constraint \eqref{u2u1} is non-local, the WNL approximation yields a local expression for the Hamiltonian in the canonical variables $(\wit \sigma, \wit \tau)$ prescribed by  the geometric reduction approach of Section \ref{sectHamred}. The comparison of our setting to that  of~\cite{BSL08}  is presented in Appendix \ref{App C}. 
\end{rem}

\subsection{WNL case: the energy in Darboux coordinates}
The final aim of this subsection is to express the energy of the system in terms of the free variables $(\zeta, \wit{\sigma},\wit{\tau})$ introduced in Section \ref{sectHamred}, with the assumption  that the two small parameters in the asymptotics are linked so  that $\alpha$ scales as $O(\eps^2)$, and hence terms of order $O(\alpha^2, \alpha \epsilon^2, \eps^4)$ can be neglected.  This will be done in a few steps.

The first one consists in noticing that in  this asymptotics the  kinetic energy densities  \eqref{T1}, \eqref{T2} preliminarily reduce to
\begin{equation}\label{TsumWNL}
    T_{{\rm WNL}\,i} = \frac{{h}\, \rho_i}{2}\left[\eta_i(\wit{u}_i^2+ \wit{v}_i^2)+\frac{\eps^2}{3} \frac{h_i^3}{{h}^3} 
    \left(2(
    \wit{u}_i\Delta\wit{u}_{i}+\wit{v}_i\Delta\wit{v}_{i})+
\left(
      \wit{u}_{i\,x}+
    \wit{v}_{i\,y}
\right)^2
\right) 
\right]+O( \alpha\eps^2, \eps^4) \, .
\end{equation}
By using the (vector) Green identity
\begin{equation}\label{VGI}
\mb{X} \cdot \Delta\mb{X}+(\boldsymbol{\nabla}\cdot\mb{X})^2=-(|\boldsymbol{\nabla}\times\mb{X}|)^2+\boldsymbol{\nabla}\cdot \left( \mb{X}\times (\boldsymbol{\nabla}\times \mb{X})+\mb{X}\ (\boldsymbol{\nabla}\cdot\mb{X})\right) \,,
\end{equation} 
we rewrite \eqref{TsumWNL} as
\begin{equation}\label{TsumWNL-N}
  T_{{\rm WNL}\,i} = \frac{{h}\, \rho_i}{2}\left[\eta_i(\wit{u}_i^2+ \wit{v}_i^2)+\frac{\eps^2}{3} \frac{h_i^3}{{h}^3} 
    \left(
    \wit{u}_i\Delta\wit{u}_{i}+\wit{v}_i\Delta\wit{v}_{i}-
\left(\wit{v}_{i\,x}-
      \wit{u}_{i\,y}   
\right)^2+T_{{\rm div}\, i}\right)
\right]+O(\alpha \eps^2, \eps^4) \, , 
\end{equation}
where in  $T_{{\rm div}\, i}$ we collected the total divergence terms coming from the last summand in~\eqref{VGI}  with $\mb{X}=\left(\wit{u_i}, \wit{v_i}, 0\right)$.

Next, we observe that the WNL approximation simplifies the effective link between $(\wit{\sigma},\wit{\tau})$ and the velocities. Indeed, 
\eqref{Sigmasheet} yields
\begin{equation}\label{uvtosigmaWNL}
\begin{pmatrix}
\wit{\sigma}\\
\wit{\tau}
\end{pmatrix}=
\begin{pmatrix}
\rho_2 \wit{v}_2-\rho_1 \wit{v}_1+\zeta_y(\rho_2 \wit{w}_2-\rho_1\wit{w}_1)\\
-\left(\rho_2 \wit{u}_2-\rho_1 \wit{u}_1+\zeta_x(\rho_2 \wit{w}_2-\rho_1\wit{w}_1)\right)\end{pmatrix}
=
\begin{pmatrix}
\rho_2 \wit{v}_2-\rho_1\wit{v}_1\\
-\rho_2 \wit{u}_2+\rho_1\wit{u}_1
\end{pmatrix}+O(\alpha\eps^2)\, , 
\end{equation}
since terms like $\zeta_y(\rho_2 \wit{w}_2-\rho_1\wit{w}_1$) scale as $O(\alpha\eps^2)$ thanks to 
(\ref{nondth}) and (\ref{vertvel2}). Moreover, 
combining~\eqref{uvtosigmaWNL} and the constraint~\eqref{u2u1} we arrive at the relations
\begin{equation}
\label{u2u1tau}
\left\{\begin{array}{l}\medskip
\wit{u}_1=\dfrac{h_{2}}{\rho_2 h_1+\rho_1 h_2}\ \wit{\tau}-\dfrac{ \epsilon^{2}}{3}  \dfrac{\rho_2h_1h_2 (h_{1}-h_{2})}{h (\rho_2 h_1+\rho_1 h_2)^2} \Delta \wit{\tau}+\alpha\dfrac{\rho_2 h^2}{(\rho_2 h_1+\rho_1 h_2)^2}\  \left(\zeta\wit{\tau} +\mathcal{N }_1 \right) +O(\alpha^2,\alpha\eps^2, \eps^4)\\
\wit{u}_2=-\dfrac{h_{1}}{\rho_2 h_1+\rho_1 h_2}\ \wit{\tau}-\dfrac{ \epsilon^{2}}{3}  \dfrac{\rho_1 h_1h_2 (h_{1}-h_{2})}{h (\rho_2 h_1+\rho_1 h_2)^2} \ \Delta\wit{\tau}+\alpha\dfrac{\rho_1 h^2}{(\rho_2 h_1+\rho_1 h_2)^2}\ \left(\zeta\wit{\tau} + \mathcal{N}_1\right) +O(\alpha^2,\alpha\eps^2,\eps^4)\, , 
\end{array}
\right.
\end{equation}
and
\begin{equation}
\label{v2v1sigma}
\left\{\begin{array}{l}\medskip
\wit{v}_1=-\dfrac{h_{2}}{\rho_2 h_1+\rho_1 h_2}\ \wit{\sigma}+\dfrac{ \epsilon^{2}}{3}  \dfrac{\rho_2h_1h_2 (h_{1}-h_{2})}{h (\rho_2 h_1+\rho_1 h_2)^2} \ \Delta\wit{\sigma}-\alpha\dfrac{\rho_2 h^2}{(\rho_2 h_1+\rho_1 h_2)^2}\ \left(\zeta\wit{\sigma} + \mathcal{N}_2\right) +O(\alpha^2,\alpha\eps^2, \eps^4)\\
\wit{v}_2=\dfrac{h_{1}}{\rho_2 h_1+\rho_1 h_2}\ \wit{\sigma}+\dfrac{ \epsilon^{2}}{3}  \dfrac{\rho_1 h_1h_2 (h_{1}-h_{2})}{h (\rho_2 h_1+\rho_1 h_2)^2} \ \Delta\wit{\sigma}-\alpha\dfrac{\rho_1 h^2}{(\rho_2 h_1+\rho_1 h_2)^2}\ \left(\zeta\wit{\sigma}+ \mathcal{N}_2 \right) +O(\alpha^2,\alpha\eps^2,\eps^4)\, , 
\end{array}
\right.
\end{equation}
where $\mathcal{N }_i$ is the i-th component of the  non-local vector
\begin{equation}
\bm{\mathcal{N}} = \bm{\nabla} \times \Delta^{-1}\left(\bm{\nabla} \zeta \times \begin{pmatrix}
\wit{\tau}\\
\wit{\sigma}\\
0
\end{pmatrix}\right)\,.
\end{equation}
At the leading order, the previous relations yield
\begin{equation}
\label{O1rel}
\left\{
\begin{array}{l}
\medskip
\wit{u}_{1} = \dfrac{h_{2} \wit{\tau}}{\rho_2h_1+\rho_1 h_2} +O(\alpha,\eps^2)\\ 
\wit{u}_{2} = -\dfrac{h_{1} \wit{\tau}}{\rho_2h_1+\rho_1 h_2} +O(\alpha,\eps^2)
\end{array}
\right. , \quad 
\left\{
\begin{array}{l}
\medskip
\wit{v}_{1}
 = -\dfrac{h_{2} \wit{\sigma}}{\rho_2h_1+\rho_1 h_2}+O(\alpha,\eps^2)\\
\wit{v}_{2} = \dfrac{h_{1} \wit{\sigma}}{\rho_2h_1+\rho_1 h_2}+O(\alpha,\eps^2)
\end{array}
\right.\, .
\end{equation}
Let us first consider the sum of the $O(\eps^2)$  terms $T_{{\rm WNL}\,i}^{(2)}$ in 
\eqref{TsumWNL-N} (disregarding  the total divergence terms $T_{{\rm div}\ i}$).
Substituting the $O(1)$ relations ~\eqref{O1rel} 
we obtain 
\begin{equation}\label{Tnuovo2}
\begin{split}
\sum_{i=1}^2T_{{\rm WNL}\,i}^{(2)}=&
\dfrac12\dfrac{\eps^2}{3\, h^2}
\dfrac{(\rho_1 h_1+\rho_2 h_2) h_1^2h_2^2}
{( \rho_2 h_1+\rho_1 h_2)^2}
\left(\wit{\sigma}\Delta\wit{\sigma}+\wit{\tau}\Delta\wit{\tau}-(\wit{\sigma}_x+\wit{\tau}_y)^2\right)+O(\eps^4,\alpha\eps^2)
\\&
=\dfrac12\dfrac{\eps^2}{3\, h^2}
\dfrac{(\rho_1 h_1+\rho_2 h_2)h_1^2h_2^2}
{( \rho_2 h_1+\rho_1 h_2)^2}
\left(\wit{\sigma}\Delta\wit{\sigma}+\wit{\tau}\Delta\wit{\tau}\right)+O(\eps^4,\alpha\eps^2)\, , 
\end{split}
\end{equation}
owing to the solenoidal property~\eqref{relazdiv}.

Let us now consider the remaining terms in \eqref{TsumWNL-N} that add up to 
\begin{equation}\label{TWNL-0}
\sum_{i=1}^{2} T_{{\rm WNL}\,i}^{(0)}=\frac{h}{2}\left(\rho_1\eta_1(\wit{u}_1^2+\wit{v}_1^2)+\rho_2\eta_2(\wit{u}_2^2+
\wit{v}_2^2)\right) \, .
\end{equation}
 Recalling that  $\eta_i=\dfrac{h_i}{h}+(-1)^i \alpha\zeta$  and expanding in $\alpha$ we get
\begin{equation}\label{T0alpha}
\sum_{i=1}^{2} T_{{\rm WNL}\,i}^{(0)}=\frac{1}{2}\left( \rho_1{h_1}(\wit{u}_1^2+\wit{v}_1^2)+\rho_2{h_2}(\wit{u}_2^2+\wit{v}_2^2)\right)+\alpha\frac{h\zeta}{2}\left(\rho_2(\wit{u}_2^2+\wit{v}_2^2)-\rho_1(\wit{u}_1^2+\wit{v}_1^2)\right)\, .
\end{equation}
In the second summand we can substitute~\eqref{O1rel} to obtain 
\begin{equation}\label{P0alpha1}
\alpha\frac{h\zeta}{2}\left(\rho_2(\wit{u}_2^2+\wit{v}_2^2)-\rho_1(\wit{u}_1^2+\wit{v}_1^2)\right)= 
\alpha\frac{h}{2}\ \dfrac{h_{1}^{2} \rho_{2}-h_{2}^{2} \rho_{1}}{\left(\rho_2h_1+\rho_1 h_2\right)^{2}}\ \zeta\, (\wit{\sigma}^2+\wit{\tau}^2)+O(\alpha^2,\alpha\eps^2)\, .
\end{equation}
To express the first term in~\eqref{T0alpha} in terms of $(\zeta, \wit{\sigma},\wit{\tau})$ we need to use the full relations \eqref{u2u1tau} and \eqref{v2v1sigma}.
The contributions of the $O(\alpha,\eps^2)$ terms in $\frac12 \left( \rho_1{h_1}(\wit{u}_1^2+\wit{v}_1^2)+\rho_2{h_2}(\wit{u}_2^2+\wit{v}_2^2)\right)$ are readily seen to cancel out thanks to the interplay of signs and $h_i,\rho_i$ factors, and so the end result is simply 
\begin{equation}\label{P00ae}
\frac{1}{2}\left( \rho_1{h_1}(\wit{u}_1^2+\wit{v}_1^2)+\rho_2{h_2}(\wit{u}_2^2+\wit{v}_2^2)\right)=\dfrac12\dfrac{h_1h_2}{\rho_2h_1+\rho_1 h_2} \ (\wit{\sigma}^2+\wit{\tau}^2)+O(\alpha^2, \alpha\eps^2,\eps^4)\, .
\end{equation}
Notice that the effective kinetic energy is a local expression in the field variables.
Finally, adding the terms~\eqref{Tnuovo2}. \eqref{P0alpha1} and \eqref{P00ae}, and discarding total divergences we arrive at the final expression of the total effective kinetic energy density  as
\begin{eqnarray}
\begin{split}
    T^{{}^{\rm{eff}}}_{\rm{WNL}} =& 
    \frac{{h}}{2}\left(
   \frac{h_1 h_2}{h(\rho_2h_1+\rho_1h_2)}\left(\wit{\sigma}^2+\wit{\tau}^2\right)+ \alpha\frac{\rho_2h_1^2-\rho_1h_2^2}{(\rho_2h_1+\rho_1h_2)^2} \zeta\left(\wit{\sigma}^2+\wit{\tau}^2\right)+\right.\\  &\left.\frac{\eps^2}{3}\frac{h_1^2h_2^2 (\rho_2 h_2+\rho_1h_1)}{h^3(\rho_2h_1+\rho_1h_2)^2}\left(\wit{\sigma}\Delta\wit{\sigma}+\wit{\tau}\Delta\wit{\tau}\right)
    \right)
    +O(\alpha^2, \alpha\eps^2,\eps^4)\, .
  \end{split}
\end{eqnarray}
The Hamiltonian energy of the system is given by 
\begin{equation}\label{Ham_func}
    \mathcal{H}_{\rm{WNL}}=L^2\int (T^{{}^{\rm{eff}}}_{\rm{WNL}}+U)\, dx \,  dy 
\end{equation}
where $U$ is the potential energy density, 
\begin{equation}
    U = \frac{1}{2}g {h}^2(\rho_2-\rho_1)\zeta^2\,.
\end{equation}
By non-dimensionalizing $(\wit{\sigma},\wit{\tau})$ according to
\begin{equation}\label{fiscalp}
\wit{\sigma}=\wit{\sigma}^*\sqrt{g (\rho_2-\rho_1)(\rho_2 h_1+\rho_1 h_2)}\,,\qquad
\wit{\tau}=\wit{\tau}^*\sqrt{g (\rho_2-\rho_1)(\rho_2 h_1+\rho_1 h_2)}\, ,
\end{equation}
discarding $O(\alpha^2,\alpha\eps^2,\eps^4)$ terms  and dropping  ${}^*$'s gives the Hamiltonian density as 
\begin{eqnarray}\label{Hadimp}
\begin{split}
     H_{\rm{WNL}} =& 
    \frac{{g h^2(\rho_2-\rho_1)}}{2}\left(
   \frac{h_1 h_2}{h^2}\left(\wit{\sigma}^2+\wit{\tau}^2\right)+\zeta^2+ \alpha\frac{\rho_2h_1^2-\rho_1h_2^2}{h\, (\rho_2h_1+\rho_1h_2)} \zeta\left(\wit{\sigma}^2+\wit{\tau}^2\right)+\right.\\  &\left.\frac{\eps^2}{3}\frac{h_1^2h_2^2 (\rho_2 h_2+\rho_1h_1)}{h^4\, (\rho_2h_1+\rho_1h_2)}\left(\wit{\sigma}\Delta\wit{\sigma}+\wit{\tau}\Delta\wit{\tau}\right)
   \right) \, .
  \end{split}
\end{eqnarray} 
Setting hereafter  
\begin{equation}\label{const_expr}
        A = \frac{h_1h_2}{h^2}\,, \quad B =\frac{\rho_2h_1^2 -\rho_1 h_2^2}{h\, (\rho_2h_1 + \rho_1h_2)}\,, \quad \kappa = {{\frac{h_1^2h_2^2 (\rho_1 h_1+ \rho_2h_2)}{h^4\, (\rho_2h_1+\rho_1h_2)}}}\, ,
\end{equation}
and dividing the Hamiltonian density by the factor $D=g h^2(\rho_2-\rho_1)$ (which is tantamount to a time-rescaling), 
the variational derivatives of~(\ref{Hadimp})  can now be computed as
\begin{equation}
    \begin{split}
   \frac{\delta\mathcal{H}}{\delta \zeta}  &=  \zeta + \frac{\alpha}{2}B(\wit{\sigma}^2 + \wit{\tau}^2)\, ,
\\   
   \frac{\delta\mathcal{H}}{\delta \wit{\sigma}}  &= A\wit{\sigma} + \alpha B\zeta\, \wit{\sigma}+\frac{\eps^2}{3}\kappa\Delta \wit{\sigma}\, ,
\\   
   \frac{\delta\mathcal{H}}{\delta \wit{\tau}}  &=  A\wit{\tau} + \alpha B\zeta\, \wit{\tau}+\frac{\eps^2}{3}\kappa\Delta \wit{\tau}\,, 
    \end{split}
    \label{gradH}
\end{equation}
where $\mathcal{H}=\frac{1}{D}\int H_{\rm{WNL}} \,  {d}x\, {d}y$.
Equations of  motion can be found using the Hamiltonian structure \eqref{Pred} as
\begin{equation}
\label{pars}
    \begin{pmatrix}\medskip
         \zeta_t\\\medskip
        \wit{\sigma}_{t}\\
        \wit{\tau}_{t}
    \end{pmatrix} =  \begin{pmatrix}\medskip
        0&-\partial_y&\hspace{0.3cm}\partial_x\\\medskip
        -\partial_y&\hspace{0.3cm}0&\hspace{0.3cm}0\\
        \hspace{0.3cm}\partial_x&\hspace{0.3cm}0&\hspace{0.3cm}0
    \end{pmatrix}\begin{pmatrix}\medskip
        \dsl{\frac{\delta\mathcal{H}}{\delta \zeta}}\\\medskip
          \dsl{\frac{\delta\mathcal{H}}{\delta \wit{\sigma}}}\\
   \dsl{ \frac{\delta\mathcal{H}}{\delta \wit{\tau}}}
    \end{pmatrix}\, ,
\end{equation}
to obtain, explicitly, 
\begin{equation}\label{eqn_WNL}
    \begin{cases}
        &\zeta_t = A(\wit{\tau}_{x}-\wit{\sigma}_{y})+\alpha B((\zeta\wit{\tau})_x-(\zeta\wit{\sigma})_y)+\frac{\eps^2}{3}\kappa((\Delta\wit{\tau})_x-(\Delta \wit{\sigma})_y)\\
        &\wit{\sigma}_{t} = -\, \zeta_y-\alpha {B} (\wit{\sigma}\wit{\sigma}_{y}+\wit{\tau}\wit{\tau}_{y})\\
        &\wit{\tau}_{t} =  \, \zeta_x+\alpha {B}(\wit{\sigma}\wit{\sigma}_{x}+\wit{\tau}\wit{\tau}_{x})\, .
    \end{cases}
\end{equation}  
This system  has to be augmented with the constraint of zero-divergence condition 
\begin{equation} \label{divcondfin}
 \wit{\sigma}_{x}+\wit{\tau}_{y}=0\, .
 \end{equation}
In terms of  the $2D$ vector 
\begin{equation}
\bm{\gamma} =
\begin{pmatrix}
&\hspace{-0.4cm}-\wit{\tau} \\
& \hspace{-0.2cm}\wit{\sigma}
\end{pmatrix}\,,
\label{bmgamma}
\end{equation} 
system \eqref{eqn_WNL} can be written in the more  compact form
\begin{equation}\label{WNL_system_v}
    \begin{cases}
        &\zeta_t = -A\nabla\cdot \textbf{\bm{\gamma}}-\alpha B\nabla \cdot(\zeta {\bm{\gamma}})-\frac{\eps^2}{3}\kappa\nabla\cdot(\Delta \bm{\gamma})\\
        &\bm{\gamma}_t=-g'\, \nabla\zeta- \frac{\alpha}{2}{B} \nabla |\bm{\gamma}|^2
    \end{cases}
\end{equation}
with the additional constraint~(\ref{divcondfin})  becoming the  zero-curl condition
\begin{equation}\label{0curv}
\nabla\times\mb{\gamma}=0\, .
\end{equation}
With the variable $\mb{\gamma}$ thus defined, the Hamiltonian structure \eqref{Pred} assumes the more elegant and compact form
\begin{equation}
   {\cal P}^{red}= \begin{pmatrix}\label{poisson_tensor_v}
        0&-\nabla \cdot\\
        -\nabla & 0
    \end{pmatrix}\, ,
\end{equation}
with Hamiltonian 
\begin{equation}
{\mathcal H}=\frac12 \int
\left(
    (A+\alpha B\zeta)|\bm{\gamma}|^2 
    +\frac{\eps^2}{3}\kappa\Big(\bm{\gamma}\cdot\Delta\bm{\gamma}\Big)+g'\zeta^2
    \right) dx\,dy\, .
    \label{Hamscal}
\end{equation}
Here and in what follows it is useful to insert the non-dimensional parameter $g'$ as a place-holder to highlight the role of the reduced gravity in setting the dynamical evolution.
With these variables, this form of the WNL system is sometimes referred to in the literature as the Kaup-Broer-Kupershmidt-Boussinesq (KBK-Boussinesq) equations~\cite{KS25} (see also~\cite{BSL08,CGK05}) and is the two-dimensional generalization of the weakly nonlinear system in~\cite{CFOPT23}, where the Hamiltonian reduction approach  was used.

The Hamiltonian approach  helps display  a  few conservation laws associated 
with~\eqref{WNL_system_v}.  From the 
structure of the Poisson  tensor,  it  is clear that  the quantities $\int \zeta\, dx\,dy, \, \int \bs{\gamma}\, dx\,dy$ are (trivial)   constants of the motion, as obviously is  the Hamiltonian~\eqref{Hamscal} itself.
Translational invariance of our problem  guarantees  the conservation of the generators  of $x$- and $y$-translations. It is immediate to show  (using the irrotationality condition $\gamma_{1y}=\gamma_{2 x}$) that  they  are given by the  expressions
\begin{equation}
\label{Momentum-conslaws}
\mathcal{M}_1=\int \zeta\gamma_1\, dx\,dy, \quad  \mathcal{M}_2=\int \zeta\gamma_2\, dx\,dy\, .
\end{equation}
The generator of the rotational invariance in the horizontal plane can be  found as well, and it is given by
\begin{equation}\label{angmom}
\mathcal{L} \equiv \int \zeta\, (x\gamma_2-y\gamma_1)\, dx\,dy\, , 
\end{equation}
where we assume that all fields vanish sufficiently fast at spatial infinity. 

\section{Special solutions}\label{specsol}
We first look at the linearization of system~\eqref{WNL_system_v} around constant states which gives  dispersion relations for these equations. Next, we will obtain time-independent solutions of~\eqref{WNL_system_v}, and, last, examine traveling wave solutions. 
\subsection{The dispersion relation}

Linearizing system \eqref{WNL_system_v} around the state $(0,\mb{\gamma_0})\equiv(0, \gamma_{1,0}, \gamma_{2,0})$, 
\begin{equation}
     \begin{cases}
        &\zeta_t +A\nabla\cdot \mb{\gamma} +\alpha B\, \mb{\gamma_0}\cdot\nabla(\zeta) +\frac{\eps^2}{3}\kappa\Delta (\nabla\cdot\mb{\gamma})=0\\
        &\gamma_{1,t}+g'\zeta_x+\alpha {B} (\mb{\gamma_0}\cdot\mb{\gamma})_x=0\\
        &\gamma_{2,t}+g'\zeta_y+\alpha {B} (\mb{\gamma_0}\cdot\mb{\gamma})_y=0
    \end{cases}\, ,
    \label{syssigta}
\end{equation} and seeking solutions by Fourier transform leads to the dispersion relation. 
Thus, by setting
\begin{equation}
        \zeta(x,y,t) = \hat{\zeta} e^{i(\boldsymbol{k}\cdot \bss{x}-\omega t)},\quad        
        \gamma_1(x,y,t) = \hat{\gamma}_1 e^{i(\bss{k}\cdot \bss{x}-\omega t)},\quad
        \gamma_2(x,y,t) = \hat{\gamma}_2 e^{i(\bss{k}\cdot \bss{x}-\omega t)},
        \label{linwnl}
\end{equation}
where $\bm{k}=(k,\ell),\,  \mb{x}=(x,y), \, |\mb{k}|^2=k^2+\ell^2$ and the hatted quantities are constants, 
the left hand side of system~\eqref{syssigta} becomes, after simple manipulations, the matrix multiplication 
\begin{equation}
\left(
\begin{array}{ccc}\medskip
\omega-\alpha B\, \mb{k}\cdot\mb{\gamma_0}  &
    k \left( \frac{ \epsilon^2}{3} \kappa  |\mb{k}|^2-A \right)  &
\ell \left( \frac{\epsilon^2}{3} \kappa  |\mb{k}|^2-A \right)  \\ \medskip
 -g'\,k&\omega-
\alpha B\, k\gamma_{1,0}  &
-\alpha B  \, k \gamma_{2,0}\\ \medskip
-g'\, \ell  &
- \alpha B\,  \ell  \gamma_{1,0} &
 \omega- \alpha B\,  \ell  \gamma_{2,0}
\end{array}\right)
\left(\begin{array}{c}\medskip
    \hat{\zeta}\\ \medskip\hat{\gamma}_1\\ \medskip \hat{\gamma}_2
\end{array}\right)
\,.
\end{equation}
Nontrivial solutions of the system require the determinant of this matrix to be zero, which leads to 
\begin{equation}
         \omega^3-2\alpha B\, (\mb{k}\cdot \mb{\gamma_0})\, \omega^2+\alpha^2 B^2( \mb{k}\cdot\mb{\gamma_0})^2\, \omega+ (g'\kappa \frac{\epsilon^2}{3} \, |\mb{k}|^4 -g' A |\mb{k}|^2)\, \omega=0\, . 
\end{equation}
The dispersion relation is therefore
\begin{equation}\label{disprelbad}
    (\omega -\alpha B\, \bss{k}\cdot \bss{\gamma_0})^2 = g'\, |\mb{k}|^2\Big(A-\frac{\epsilon^2}{3}\kappa|\mb{k}|^2\Big)\, ,
\end{equation}
which shows that the linearized system~\eqref{syssigta} is ill-posed, as the RHS of this equation becomes negative
and is of order $O(|\bss{k}|^4)$ 
as $|\bss{k}| \to \infty$. 
As in the two dimensional case~\cite{CFOPT23}, a regularization of system~\eqref{syssigta} can be achieved by the change of dependent variables 
\begin{equation}\label{chang_coord}
\bar{\bss{\gamma}} =\bss{\gamma}+\frac{\epsilon^2}{3}\frac{\kappa}{A}\Delta \bss{\gamma}\, , \quad \text{ i.e., at } O(\eps^2) \quad 
\bss{\gamma}  =  \bar{\bss{\gamma}}-\frac{\epsilon^2}{3}\frac{\kappa}{A}\Delta  \bar{\bss{\gamma}}\,,
\end{equation}
where inverse operator are computed thanks to their near identity asymptotic weights. 

After linearizing the evolution equations around the constant state $\bss{\gamma_0}$ with these new variables, the dispersion relation of the corresponding non purely evolutionary system becomes
\begin{equation}
     (\omega -\alpha B\, \bss{k}\cdot\bss{\gamma_0})^2  = \frac{g'\, (|\mb{k}|^2)A}{1+\frac{\epsilon^2}{3} \frac{\kappa}{A}(|\mb{k}|^2)}\, ,
\end{equation}
which is well-posed for all $\bss{k}$'s.

Note  that the change of coordinates \eqref{chang_coord} between $\mb{\gamma}$ and $\bar{\bss\gamma}$ corresponds to the change of coordinates~\eqref{tildetobar2} between interfacial and layer averaged variables. 
In fact, coming back to dimensional variables, with the definition 
\begin{equation}
    \begin{pmatrix}
    \bar\gamma_1 \\
     \bar\gamma_2
     \end{pmatrix}
     \equiv \begin{pmatrix}
       \rho_2 \bar u_2- \rho_1 \bar u_1 \\ \rho_2\bar{v}_2 -\rho_1\bar v_1
       \end{pmatrix}
\end{equation} 
and~\eqref{tildetobar2}
we have, e.g. for the second component, 
\begin{equation}\label{comp_sigmabar}
\begin{split}
    \gamma_2 &= \rho_2 \wit{v}_2-\rho_1 \wit{v}_1\\
             &= \rho_2\bar v_2-\rho_1\bar v_1-\rho_2\frac{\eps^2}{3}\frac{h_2^2}{{h}^2}\Delta
          \bar v_2+\rho_1\frac{\eps^2}{3}\frac{h_1^2}{{h}^2}\Delta\bar v_1\, .
    \end{split}
\end{equation}
The definition of $\bar{\gamma}_2$, and use of the WNL constraint at leading order,
\begin{equation}
{h_1}\wit{v}_1 = -{h_2}\wit{v}_2 + O(\alpha) \,,
\end{equation}
yields
\begin{equation}\begin{split}
   \bar{\gamma}_2 &= \frac{\rho_2 h_1+\rho_1 h_2}{h_1} \bar v_2+O(\alpha)\\
   \bar {\gamma}_2 &= -\frac{\rho_2 h_1+\rho_1 h_2}{h_2} \bar v_1+O(\alpha)\,.
   \end{split}
\end{equation}
This can be inverted and plugged  into equation~\eqref{comp_sigmabar} (dropping terms of order $\alpha \epsilon^2$) to find
\begin{equation}
    \begin{split}
         \gamma_2 =\bar{\gamma}_2-\frac{\eps^2}{3}\frac{h_2 h_1(\rho_2h_2+\rho_1 h_1)}{h^2(\rho_2 h_1+\rho_1 h_2)}\Delta
         \bar{\gamma}_2\, , 
    \end{split}
\end{equation}
which plainly holds also for non-dimensional variables. The coefficient  of the Laplacian in this relation is
\begin{equation}
    \frac{h_2 h_1(\rho_2h_2+\rho_1 h_1)}{h^2(\rho_2 h_1+\rho_1 h_2)}= \frac{\kappa}{A}\, , 
\end{equation}
which yields~\eqref{chang_coord}. Similar manipulations hold for the first component.
This shows that the regularizing variable $\bar{\bss{\gamma}}$ represents the jump between the averaged momentum density vectors of each fluid.

\begin{rem} 
The regularization~\eqref{chang_coord} is different from the one used in~\cite{BaCh13}, where the change of variables leading to the shear is done through the choice of the bottom and top layer as reference height of the horizontal velocities in the layers. The resulting model is however still ill-posed with respect to a dispersion critical wavenumber depending on the velocity shear.  It has to be remarked that a velocity shear is intrinsic in our choice of the $\mb{\gamma}$-coordinates, which basically represent a momentum shear, and with the regularizing choice~\eqref{chang_coord} all Fourier modes of the linearized WNL equation are stable, avoiding instabilites that would hamper possible numerical approaches to the study of the system. In the $O(\alpha, \eps^2)$ asymptotics, all these models are (asymptotically) equivalent, albeit with different dispersion relations that  become relevant for large wave numbers $|\mb{k}|$, that is, beyond the expected applicability ranges of the WNL approximation.

The variable change \eqref{chang_coord} is not necessarily confined to the analysis of the linearized model. Indeed, substituting~\eqref{chang_coord} in the full WNL equations~\eqref{syssigta} gives rise, still at $O(\alpha, \eps^2)$, to the following non local system 
\begin{equation}
\label{tipoMMCC}
    \begin{cases}
        &\zeta_t = -A\nabla\cdot \textbf{\bm{\bgamma}}-\alpha B\nabla \cdot(\zeta {\bm{\bgamma}})\\
        &\bm{\bgamma}_t-\frac{\eps^2\kappa}{3\, A} \Delta\mb{\bgamma}_t=-g'\, \nabla\zeta- \frac{\alpha}{2}{B} \nabla  |\bm{\bgamma}|^2
    \end{cases} 
\end{equation}
supplemented with the additional constraint $\nabla\times\mb{\bar{\gamma}}=0$.

From the Hamiltonian point of view, the effects of  the variable change~\eqref{chang_coord} are quite subtle and can be described as follows. 
Since the (Fr\'echet) Jacobian of \eqref{chang_coord} is
\begin{equation}\label{jactilde-bar}
\left(
\begin{array}{ccc}
\mb{1}&0&0\\
0&\mb{1}+\frac{\eps^2\,\kappa}{3\, A}\Delta&0\\
0&0&\mb{1}+\frac{\eps^2\,\kappa}{3\, A}\Delta
\end{array}
\right)\, , 
\end{equation}
the Darboux Poisson tensor \eqref{poisson_tensor_v} is transformed  into the non-canonical one
\begin{equation}
\label{Pbar}
\bar{\mathcal{P}}=
-\left(
\begin{array}{ccc}\medskip
0&\partial_x+\frac{\eps^2\,\kappa}{3\, A}\Delta\partial_x &\partial_y+\frac{\eps^2\,\kappa}{3\, A}\Delta\partial_y\\\medskip
\partial_x+\frac{\eps^2\,\kappa}{3\, A}\Delta\partial_x&0&0\\\medskip
\partial_y+\frac{\eps^2\,\kappa}{3\, A}\Delta\partial_y&0&0\\ 
\end{array}
\right)\, , 
\end{equation}
where $\Delta$ is the horizontal Laplacian, we  used the commutativity properties of the constant-coefficients differential operators, and we fully displayed the tensor to avoid cumbersome notations. To obtain system~\eqref{tipoMMCC} within this Hamiltonian formalism, one has to use the full inversion formula to the first of \eqref{chang_coord}, namely 
\begin{equation}\label{nonloc-cc}
{\mb{\gamma}}=(\mb{\mathcal{T}})^{-1}\mb{\bgamma}\quad\text{with }  \mb{\mathcal{T}}=\mb{1}+\frac{\eps^2\,\kappa}{3\, A}\Delta\, .
\end{equation}
Under such a transformation, the Hamiltonian \eqref{Hamscal} transforms into the non-local expression
\begin{equation}
\label{Hamscalbar}
\bar{{\mathcal H}}=\frac12 \int
\left(
    (A+\alpha B\zeta)|{\BcT^{-1}({\bgamma})}|^2 
    -\frac{\eps^2}{3}\kappa\Big(\BcT^{-1}({\bgamma})\cdot\Delta\BcT^{-1}(\bgamma)\Big)+g'\zeta^2
    \right) dx\, dy\, .
\end{equation}
Further study of this system falls outside the size of this paper, and we leave this to future work.
\end{rem}

\subsection{Time-independent solutions}

We now consider time-independent solutions of the WNL motion equations~\eqref{WNL_system_v}.  Dropping the time derivatives, and setting the parameters $\epsilon$ and $\alpha$ to $1$, we get the system
\begin{equation}\label{WNL_system_vsta}
    \begin{cases}\medskip
        &A\nabla\cdot \bm{\gamma}+\
        B\nabla\cdot (\zeta  \bm{\gamma})
        +\dsl{\frac{\kappa}{3}}\Delta (\nabla\cdot\bm{\gamma}) =0 \,, \\
        &g'\nabla\zeta +\dsl{{B \over 2} }\nabla |\bm{\gamma}|^2=0 \, .
    \end{cases}
\end{equation}
The second equation for $\nabla \zeta$   gives the explicit dependence of $\zeta$ on $|\bm\gamma|$ and the parameter $B$, 
\begin{equation}
\zeta=-{B \over 2\, g'} |\bm\gamma|^2 \,,
\label{zetgam}
\end{equation}
under the assumption of zero velocities and interface displacement at infinity.

The coefficient $B$  is not sign definite, and vanishes at the critical ratio 
$h_1/h_2=\sqrt{\rho_1/\rho_2}$, so that elevation or depression of the interface depends on these ``hardware" parameters. 
Indeed, from \eqref{zetgam}, the elevation $\zeta$ of the interface has the opposite sign of that of the parameter $B$ in~(\ref{pars}).
Hence, the role of $B$ makes the interface move away from the deeper fluid layer, at this order of approximation. Note that this attractive effect of the ``nearest" horizontal  boundary dictated by $B$ is curiously  opposite to its counterpart in the traveling wave solutions, well known for the $2D$ case (see also the next section). 
Whether this phenomenon of ``nearest-boundary attraction" for this class of nontrivial stationary solutions in the two-layer case 
is a mere  artifact of our weakly non-linear asymptotics, or it has a deeper physical meaning, is a problem that deserves a more detailed study, but one that falls beyond the aims of the present paper.

The stationary solutions can also be associated with an explicit expression for the interfacial pressure   $P$.  The layer averaged momentum weakly non-linear equation (4.4) of \cite{ChCa96},  
{
specialized to the case of flat bottom
(see, also, eq.\ (2.20) of~\cite{BaCh13}), yields for the lower fluid interface velocity $\wit{\mb{u}}=(\wit{u}_2,\wit{v}_2)$  the evolution
\begin{equation}\label{Cahlam}
\wit{\mb{u}}_t+(\wit{\mb{u}}\cdot \nabla) \wit{\mb{u}}+g (1-\frac{\rho_1}{\rho_2})\nabla\zeta=-\frac{\nabla P}{\rho_2}+ \frac{\rho_1}{\rho_2}( \frac{h_1^2}2\nabla(\nabla\cdot \wit{\mb{u}})_t+h_1\zeta_{tt})  +o(\epsilon^2, \alpha)\, .
\end{equation}
Restoring  the asymptotic small parameter $\alpha\simeq \epsilon^2$ in the quadratic inertia term, and considering stationary  flows,
yields the Bernoulli relation
\begin{equation}
P=-(\rho_2 -\rho_1) g \zeta-{\alpha \over 2}\rho_2 (\wit{u}_2^2+\wit{v}_2^2) \, ,
\end{equation}
since the interface is a streamline.
From relations~\eqref{O1rel} we get 
\begin{equation}
P=-(\rho_2-\rho_1)  g \zeta-\frac{\alpha}{2}\, \frac{h_1^2\rho_2}{(\rho_2 h_1+\rho_1 h_2)^2}|\bm\gamma|^2\,.
\end{equation}
This relation shows that to leading order the pressure is hydrostatic, i.e., follows the interfacial displacement, and the weighted vorticity contributes a weak, $O(\alpha)$, depressive term.}
\begin{rem} Substituting \eqref{zetgam} in the first of \eqref{WNL_system_vsta} we arrive at the following equation for $\bs{\gamma}$
\begin{equation}\label{eqgamdiv}
\nabla\cdot\left(
A\bm{\gamma}\
        -\dsl{\frac{B^2}{2\, g'}} ( |\bm\gamma|^2\bm{\gamma})
        +\dsl{\frac{\kappa}{3}}\Delta \bm{\gamma}
\right)=0 \, . 
\end{equation}
Standing, irrotational  wave solutions with frequency ${A}$  of the defocusing vector Nonlinear  Schr\"odinger equation
\begin{equation}\label{VNLSE}
i\partial_t\bm{\gamma}-\dsl{\frac{B^2}{2\, g'}} ( |\bm\gamma|^2\bm{\gamma})
        +\dsl{\frac{\kappa}{3}}\Delta \bm{\gamma} =0\, 
\end{equation}
satisfying  suitable boundary conditions provide non-trivial solutions to \eqref{eqgamdiv}, as noticed in the recent paper \cite{KS25}. Further study on this connection is ongoing.
\end{rem}

\subsection{Traveling solitary plane wave solutions}\label{TPSW-KBKB}

By using rotational invariance, we can reduce 
the plane-traveling wave ansatz for system~\eqref{WNL_system_v}  to 
\begin{equation}\label{ptw-Ans}
    \zeta(x,y,t) = \zeta(x-ct)\,, \quad \gamma_1(x,y,t) = \gamma_1(x-ct )\,, \quad \gamma_2(x,y,t) = \gamma_2(x-ct )\, .
\end{equation}
Integrating once with respect to $x$ and taking into account the far-field vanishing boundary condition turns the system into 
\begin{equation}\label{pwt2sys}
     \begin{cases}\medskip
        -c \zeta + A \gamma  +  B\,\zeta \gamma +\dsl{\frac{\kappa}{3}}\gamma_{xx}=0\\
        -c\gamma+ g'\zeta+\dsl{\frac{B}{2}}\,\gamma^2=0\\
            \end{cases}
\end{equation}
for the pair $(\zeta, \gamma\equiv\gamma_1)$, since under the plane-traveling wave ansatz \eqref{ptw-Ans} the $y$ component  $\gamma_2$ of the vector $\bm{\gamma}$ vanishes.
This is the traveling wave system associated with the $1+1$ dimensional system of PDEs
\begin{equation}
\label{1D-BK}
\begin{cases}\medskip
&\zeta_t+A \gamma_x  +  B(\zeta \gamma)_x +\frac{\kappa}{3}\gamma_{xxx}=0\\
&\gamma_t+ g'\zeta_x+{B}\gamma\, \gamma_x=0\\
\end{cases}\, , 
\end{equation}
which is the parametric form of the classical $1+1$-dimensional Kaup-Broer-Kupershmidt integrable shallow water equation considered in \cite{CFOPT23}.
Its one-soliton solution can be found in a few simple steps, which we  report here for the sake of completeness. Solving the second algebraic equation of \eqref{pwt2sys}  we get
\begin{equation}\label{gammatozeta}
\zeta=\frac1{g'}\Big( c\gamma-\dsl{\frac{B}{2}}\, \gamma^2\Big)\, , 
\end{equation}
so that we obtain  for $\gamma$ the  ODE
\begin{equation}\label{Neweq}
\frac{2 g' \kappa }{ 3}\ \gamma_{{xx}}  = 
B^{2} \gamma^{3}-3 c \,\gamma^{2} B +2\left( c^{2}-Ag'\right) \gamma
\end{equation}
equivalent to a Newton Second Law of motion for a point-particle of mass $\mu=\dsl{\frac{2 g' \kappa }{ 3}}$ in the  potential energy
\begin{equation}\label{Ueff}
U=\gamma^2\left(
(A g' -c^{2})+B c \gamma-\frac{1}{4} B^{2} \gamma^{2} 
\right) \,.
\end{equation}
The first condition to have a traveling wave is the usual one, stating that soliton travel faster than the linear waves $c_0=\sqrt{g'A}$,  i.e., 
\begin{equation}\label{c2>gA}
c>c_0\, .
\end{equation}
Then we can a priori notice that the  amplitude   of the $\gamma$-peak is given by
\begin{equation}\label{gamax}
\gamma_m = \frac{2}{B} (c - c_0)\, , 
\end{equation}
while the explicit solution expression is 
\begin{equation}\label{gammasol}
\gamma(x,t)=\frac{4 \,\left(c^{2}-c_{0}^{2}\right)}{B}\>\frac{{\mathrm e}^{-\left(x-c t  \right) c_\Delta } }{ \left(2 c \,{\mathrm e}^{-\left(x-c t \right) c_\Delta} +c_{0} {\mathrm e}^{-2 \left(x-c t\right) c_\Delta}+c_{0}\right)}\, , 
\end{equation}
and the corresponding interface elevation  solution is given by
\begin{equation}
\label{intelsol}
\zeta(x,t)= \frac{4 A \left(c^{2}-c_{0}^{2}\right)}{c_{0} B} \>\frac{{\mathrm e}^{-\left(x-c t\right) c_\Delta} \left(2   c_{0}\,{\mathrm e}^{-\left(x-c t\right) c_\Delta}+ c\,{\mathrm e}^{-2\left(x-c t\right) c_\Delta} +c \right) }{ \left(2 c\, \,{\mathrm e}^{-\left(x-c t\right) c_\Delta} +c_{0} {\mathrm e}^{-2\left(x-c t\right) c_\Delta}+c_{0}\right)^{2}}\, ,
\end{equation}
where $c_\Delta=\sqrt{({c^2-c_0^2})/{\mu}}$.
This solution is well known from the theory of the KBK-B system.  The remarkable fact in the $2$-layer setting is that waves are of elevation for $B>0$ and of depression for $B<0$, that is, we have solitary waves of elevation (`bright') when $h_1>\sqrt{\frac{\rho_1}{\rho_2}}\, h_2$, and of depression (`dark')  in the opposite case.

The no-contact condition with the plates of the channel translates into
\begin{equation}\label{notouch1}
\begin{cases}
\zeta_m<\frac{h_1}h\quad \text{for } B>0\, , \\
\zeta_m>-\frac{h_2}h\quad \text{for } B<0\, , 
\end{cases}
\end{equation}
where the maximum $\zeta$ amplitude is computed from \eqref{intelsol} as
$$
\zeta_m=\dsl{\frac{2\, A}{B}\frac{c-c_0}{c_0}}\, .
$$
Condition \eqref{notouch1} together with the characteristic condition \eqref{c2>gA} for the existence of solitary wave solutions  is satisfied for
\begin{equation}
\begin{cases}\medskip
c_0 <c <c_0 \dsl{(1+\frac{B\,h_1}{2A\, h})}\quad  \text{for } B>0\, , \\
c_0 <c <c_0 \dsl{(1-\frac{B\, h_2}{2A\, h})}\quad  \text{for }B<0\, ,\\
\end{cases}
\end{equation}
whence there exists a range of speeds (and corresponding amplitudes) supporting traveling ``line''  solitary wave solutions in the $2$-layer case.
We remark that, notwithstanding the intrinsic limitations associated with our WNL approximation, the expected appearance of bright ($B>0$) and dark ($B<0$) solitary wave is recovered. It should also be noticed that  $B$   is small whenever $h_1/h\sim h_2/h=O(1)$, so that the range of admissible wave speeds is quite limited. Moreover the model does not support heteroclinic orbits and hence limiting conjugate-state-like solutions \cite{ChCa99}.

\section{The KP regime}\label{KP}
In this section we show  how the so called Kadomtsev-Petviashvili equation arises in the process of unidirectionalization of the KBK-Boussinesq system~\eqref{WNL_system_v}, by assuming a slow variation in the $y$-direction and a weak non linearity. 
We will first derive the KP equation directly from the KBK-Boussinesq system, and then equip it with the reduced Hamiltonian structure associated with the asymptotic process with the systematic Dirac reduction technique.

The KP asymptotic regime  is obtained by breaking the rotational symmetry of the KBK-B equations, and assuming that the horizontal independent  original variables have different scaling laws along the two horizontal directions, namely $x = Lx^*$ and $y = L'y^*$, with  ${L}/{L'} = O(\eps)$. In terms of the rescaled variables we are using
we have
\begin{equation}\label{scaling_y}
  y'=\beta y\,  \Rightarrow \partial_y=\beta\partial_{y'} \qquad \hbox{with} \quad \beta = O(\eps)\, .
\end{equation}
As a  notable consequence of this modified $y$-scaling the second component $\gamma_2$ is to be replaced by 
\begin{equation}\label{gamma2prime}
\gamma_2^\prime=\dsl{\frac1{\beta}}\gamma_2\, ,
\end{equation}
owing to the fact that $\bm{\gamma}$ is irrotational, due to its definition~\eqref{bmgamma} and the divergence free condition $\wit{\sigma}_x+\wit{\tau}_y=0$.

Under the WNL assumptions  the KBK-Boussinesq model \eqref{WNL_system_v} becomes
\begin{equation}\label{KBK1}
   \begin{cases}\smallskip
       \zeta_t+A\gamma_{1\,x}+\beta^2 A \gamma_{2\,y^\prime}^\prime+\alpha B(\zeta \gamma_1)_x+\dsl{\frac{\eps^2}{3}}\kappa\gamma_{1\,xxx}=0\\ \smallskip
       \gamma_{1\,t}+g'\zeta_x+\alpha \frac{B}{2}(\gamma_1^2)_x=0\\
       \beta  \gamma^\prime_{2\,t}+\beta g'\zeta_{y^\prime}+\alpha \beta \frac{B}{2} (\gamma_1^2)_{y^\prime}=0\, .
   \end{cases}
\end{equation}
This system has to be augmented by the irrotationality condition that now reads 
   $ \beta\gamma^\prime _{2\,x}=\beta \gamma_{1\,y^\prime}\, .$
Dropping primes from $y$-derivatives and $\gamma_2$ we finally arrive at
\begin{equation}\label{full_mod_beta}
    \begin{cases}
       \zeta_t+A\gamma_{1\,x}+\beta^2 A \gamma_{2\,y}+\alpha B(\zeta \gamma_1)_x+\frac{\eps^2}{3}\kappa\gamma_{1\,xxx}=0\\
       \gamma_{1\,t}+g'\zeta_x+\alpha \frac{B}{2}(\gamma_1^2)_x=0\\
 \gamma_{2\,t}+ g'\zeta_y+\alpha \frac{B}{2} (\gamma_1^2)_y=0\\
      \gamma_{2\,x}=\gamma_{1\,y}\, .
   \end{cases}
\end{equation}
We now proceed with the derivation of the unidirectional model, similarly to the case of the KdV equation in~\cite{Wh2000}, by seeking a relation between the dependent variables that makes the first two equations of system~\eqref{full_mod_beta} identical. The rationale for this is that
the third equation of system~\eqref{full_mod_beta} is simply a compatibility condition on the components of~$\bm{\gamma}$,  satisfied  at the asymptotic order we are considering.

We thus seek a relation (akin to a Riemann invariant) that keeps track of the contribution of weak nonlinearity and dispersion to the right going linearized restricted system, 
\begin{equation}
    \begin{cases}
        \zeta_t +A\gamma_{1\,x}=0\\
        \gamma_{1\,t}+g'\zeta_x=0\, ,
    \end{cases}
\end{equation}
as 
\begin{equation}
      \gamma_1 = \sqrt{\frac{g'}{A}}\zeta+\alpha F(\zeta)+\epsilon^2 G(\zeta)
           +O(\alpha\eps^2,\alpha^2,\eps^4,\beta^2)
\end{equation}
where $F(\zeta)$ and $G(\zeta)$ are  differential polynomials in the variable $\zeta$. Making the first two equations in~\eqref{full_mod_beta} equivalent up to order $O(\beta^2)$ yields 
\begin{equation}
   \gamma_1 = \sqrt{\frac{g'}{A}}\zeta-\frac{\alpha}{4} \frac{B}{A}\sqrt{\frac{g'}{A}}\zeta^2-\frac{\eps^2}{6}\frac{\kappa}{A}\sqrt{\frac{g'}{A}}\zeta_{xx}+O(\beta^2)   \,.
\end{equation}
Next, we account for the $O(\beta^2)$ terms in the system
\begin{equation}
    \begin{cases}
       \zeta_{\,t}+A\gamma_{1\,x}+\beta^2 A\partial_x^{-1}\gamma_{1\,yy}+\alpha B(\zeta \gamma_1)_{x}+\frac{\eps^2}{3}\kappa\gamma_{1\,xxx}=0\\
       \gamma_{1\,t}+g'\zeta_{x}+\alpha \frac{B}{2}(\gamma_1^2)_{x}=0\,.
   \end{cases} 
   \label{interKP} 
\end{equation}
The relation between $\gamma_1$ and $\zeta$ to make the two equations equivalent at order~$O(\beta^2)$ can now be sought as 
\begin{equation}
   \gamma_{1\, }= \sqrt{\frac{g'}{A}}\zeta-\frac{\alpha}{4} \frac{B}{A}\sqrt{\frac{g'}{A}}\zeta^2-\frac{\eps^2}{6}\frac{\kappa}{A}\sqrt{\frac{g'}{A}}\zeta_{xx}+\beta^2K(\zeta)  \, ,
\end{equation}
which, once inserted  in~\eqref{interKP}, results in 
\begin{equation}\label{kp_constr}
   \gamma_{1}= \sqrt{\frac{g'}{A}}\left(\zeta-\frac{\alpha}{4} \frac{B}{A}\, \zeta^2-\frac{\eps^2}{6}\frac{\kappa}{A}\zeta_{xx}-\frac{\beta^2}{2}\partial_x^{-2}\zeta_{ yy}\right) \, .
\end{equation}
The final form of the sought-after unidirectional equation, after taking an $x$-derivative, is thus
\begin{equation}\label{KP_eq}
     \Big(\zeta_{ t}+\sqrt{Ag'}\zeta_{x}+\frac{3}{4}\alpha \sqrt{\frac{g'}{A}}B(\zeta^2)_{x}+\frac{\eps^2}{6}\kappa\sqrt{\frac{g'}{A}}\zeta_{xxx}\Big)_x+\frac{\beta^2}{2}\sqrt{A g'} \, \zeta_{yy} = 0\, ,
\end{equation}
which is a Kadomtsev-Petviashvili (KP-II) equation with the coefficients determined by the hardware parameters $\rho_i, h_i$. Note that coefficient $B$ of the nonlinear term can change sign across the critical height relation
\begin{equation}
{h_1\over h_2}=\sqrt{\rho_1\over \rho_2}\,.
\end{equation}

\subsection{The KP Hamiltonian structure}
The KP bi-Hamiltonian structure can be obtained within our scheme by means of a suitable Dirac Hamiltonian reduction \cite{Dirac}.
By applying the scaling \eqref{scaling_y},  with $\beta \sim\eps\sim\sqrt{\alpha} $ to  the Hamiltonian functional, taking into account~\eqref{gamma2prime}, and  discarding terms of order higher than $\alpha, \eps^2,\beta^2$, we obtain  the asymptotic Hamiltonian in the KP regime 
\begin{equation}\label{H_KP-full}
    \mathcal{H}_{KP} = \frac12\int  \left(A\, \gamma_1^2 +\beta^2 A\gamma_2^2 +\alpha B\zeta \gamma_1^2+\frac{\eps^2}{3}\kappa\gamma_1 \gamma_{1\,xx}+{g'}\zeta^2 \right) dx\, dy\, ,
\end{equation}
whose Hamiltonian equations of motion given by the Poisson operator \eqref{poisson_tensor_v} and  supplemented by the   irrotationality condition $\gamma_{2,x}= \gamma_{1,y}$  are
\begin{equation}
    \begin{cases}
       \zeta_t+A\gamma_{1\,x}+\beta^2 A \gamma_{2\,y}+\alpha B(\zeta \gamma_1)_x+\frac{\eps^2}{3}\kappa\gamma_{1\,xxx}=0\\
       \gamma_{1\,t}+g'\zeta_x+\alpha \frac{B}{2}(\gamma_1^2)_x=0\\
        \gamma_{2\,t}+ g'\zeta_y+\alpha \frac{B}{2} (\gamma_1^2)_y=0 \\
        \gamma_{2\,x}=  \gamma_{1\,y}\,  .
        \end{cases}
\end{equation}

The system thus obtained coincides with the direct asymptotic expansion \eqref{full_mod_beta} of the KBK-B model we presented above, a consequence that comes all the way back from the constant nature of the Hamiltonian operator~(\ref{Pred}) from the first two-layer reduction.
The idea is to consider the irrotationality condition and the unidirectional constraint  in the form given by~\eqref{kp_constr} 
as constraints tying $\zeta$ and $\gamma_1$, and  find the Hamiltonian structure(s) for the KP equation \eqref{KP_eq}  by using the extension to the infinite dimensional case of the  Dirac reduction procedure of the standard Poisson tensor \eqref{poisson_tensor_v}.
To this end,  we recall that the celebrated Dirac reduced bracket~\cite{Dirac} for a finite-dimensional Hamiltonian system with coordinates 
$(y_1, \ldots, y_N)$ subject to $M$ constraints $\Phi_a=0, a=1,\ldots, M \,, M<N$, such that the determinant of the matrix of the Poisson brackets of the constraints $\mathcal{C}_{a,b}=\{\Phi_a, \Phi_b\} $ be invertible, reads
\begin{equation}
\label{DirFD}
\{y_i, y_j\}^D=\{y_i, y_j\}-\sum_{a,b=1}^M \{y_i, \Phi_a\}(\mathcal{C}^{-1})_{ab} \{\Phi_b, y_j\}\, . 
\end{equation}
It is quickly realized that the constraints $\Phi_a$, $a=1,\ldots,M$, are Casimir functions for the Dirac bracket, which implies that it restricts directly to the constrained manifold defined by the equations $\Phi_a=0$, $a=1\ldots,M$. This observation suggests that a suitable procedure
to obtain the Poisson tensor corresponding to the Dirac bracket \eqref{DirFD} can be the following:
\begin{enumerate}
\item Pick a set of coordinates adapted to the constraints, that is, consider the coordinate set $(y_{j_1}, \ldots, y_{_{N-M}}, \Phi_1, \ldots, \Phi_M)$.
\item Write the Poisson tensor $P$ corresponding to the original bracket $\{\cdot, \cdot\}$ in these new coordinates to obtain the new matrix representation
\begin{equation}
\label{PDtilde}
\wit{P}=\left(
\begin{array}{cc}
\{y_k, y_l\}&\{y_i, \Phi_a\}\\
\{\Phi_a,y_\ell\}&\{\Phi_a,\Phi_b\}
\end{array}
\right)
\equiv \left(
\begin{array}{cc}
\mathcal{A}& \mathcal{B}\\
-\mathcal{B}^T&\mathcal{C}
\end{array}\right)\,.
\end{equation}
\item Apply formula \eqref{DirFD} to read the Dirac reduced tensor as
\begin{equation}
\label{DirRP}
\wit{P}^D=\left(
\begin{array}{cc}
\mathcal{A}- \mathcal{B}\cdot\mathcal{C}^{-1}\cdot \mathcal{B}^T&0\\0&0
\end{array}\right)\, , 
\end{equation}
 whose restriction to the constrained manifold, parametrized by the coordinates $(y_{j_1}, \ldots, y_{_{N-M}})$, is given by the $(N-M)\times(N-M)$ upper left block
$\mathcal{A}- \mathcal{B}\cdot\mathcal{C}^{-1}\cdot \mathcal{B}^T$.
\end{enumerate}
Such a procedure can be easily implemented ({\em mutatis mutandis} and with a grain of salt) to the field theoretical model we are herewith studying as follows.
 
We consider the  expression of the constraints~\eqref{kp_constr} as defining two new  coordinates as 
\begin{equation}
\begin{split}
    \Phi_1 &\equiv  \gamma_1 - \sqrt{\frac{g'}{A}}\zeta+\frac{\alpha}{4} \frac{B}{A}\sqrt{\frac{g'}{A}}\zeta^2+\frac{\eps^2}{6}\frac{\kappa}{A}\sqrt{\frac{g'}{A}}\zeta_{xx}+\frac{\beta^2}{2}\sqrt{\frac{g'}{A}}\partial_x^{-2}\zeta_{ yy}\\
    \Phi_2 &\equiv \gamma_{2\,x}- \gamma_{1\,y}\, .
    \end{split}
\end{equation}

As a first step in the  Dirac reduction procedure we express the Poisson structure \eqref{poisson_tensor_v} in the coordinates   $(\zeta, \Phi_1, \Phi_2)$.
The Jacobian operator-valued matrix of the transformation $(\zeta, \gamma_1,\gamma_2)\mapsto (\zeta, \Phi_1, \Phi_2)$ is
\begin{equation}
    J = \begin{pmatrix}
        1 &0&0\\
        \delta_\zeta \Phi_1&1&0\\
        0 & -\partial_y & \partial_x
    \end{pmatrix}\, ,
\end{equation}
where  $\delta_\zeta \Phi_1$ is the Fr\'echet derivative of $\Phi_1$ with respect to $\zeta$,  given by \begin{equation}
    \delta_\zeta \Phi_1  = - \frac{g'}{c_0}+\frac{\alpha}{2}\frac{B}{A}\frac{g'}{c_0}\zeta +\frac{\eps^2}{6}\frac{\kappa}{A}\frac{g'}{c_0} \partial_{xx}+\frac{\beta^2}{2}\frac{g^\prime}{c_0} \,\partial_x^{-2}\partial_y^2\,.
\end{equation}
Here to improve readability we used the relation $c_0=\sqrt{A\, g'}$ tying the speed $c_0$ of the linear waves to the ``mass" and gravity parameters of the problem.

The Poisson tensor \eqref{poisson_tensor_v} is given in terms of the new coordinates $(\zeta, \Phi_1, \Phi_2)$ by 
\begin{equation}
   {\cal P}_{KP} = J {\cal P}J^T = \begin{pmatrix}
       0 &-\partial_x &0\\
       -\partial_x& \mathcal{P}_{22}(\zeta, \partial_x^{-1}, \partial_y)&0\\
       0 & 0 &0
   \end{pmatrix}
\end{equation}
where, explicitly,
\begin{equation}\label{P22}
\begin{split}
  \mathcal{P}_{22} (\zeta, \partial_x^{-1}, \partial_y) =& -(\partial_x  (\delta_\zeta \Phi_1)^T+ (\delta_\zeta \Phi_1)\partial_x ) =\\ &2 \frac{g'}{c_0}\partial_x -\frac{\alpha}{2}\frac{B}{A} \frac{g'}{c_0}(\partial_x \zeta + \zeta \partial_x)-\frac{\eps^2}{3}\frac{\kappa}{A} \frac{g'}{c_0} \partial_{xxx}- \beta^2\frac{g'}{c_0} \partial_y \partial_x^{-1} \partial_y\,.
    \end{split}
\end{equation}
From this expression we have that $\Phi_2$ is already a Casimir of ${\cal P}_{KP}$, so that we need to use a Dirac reduction only on the constraint $\Phi_1$, 
and obtain the reduced Dirac tensor on the constrained manifold defined by $\Phi_1=0\,, \Phi_2=0$, parametrized by the free coordinate $\zeta$,  by the effective formula 
\begin{equation}\label{P_kp_red}
    {\cal P}_{KP}^{D} = \partial_x
   \left(\partial_x  \delta_\zeta \Phi_1+\delta_\zeta \Phi_1\partial_x \right)^{-1}\partial_x\, .
\end{equation}
The operator $\partial_x  (\delta_\zeta \Phi_1)^T+ (\delta_\zeta \Phi_1)\partial_x \equiv  \partial_x  (\delta_\zeta \Phi_1)+ (\delta_\zeta \Phi_1)\partial_x $ turns out to be invertible in the ring of microdifferential operators, with an inverse that
at this asymptotic order is given by 
\begin{equation}\label{inv_canc}
    \left(\partial_x  (\delta_\zeta \Phi_1)+ (\delta_\zeta \Phi_1)\partial_x\right)^{-1} = -\frac{1}{4}\frac{c_0}{g'}\left(2 \partial_x^{-1} + \frac{\alpha}{2}\frac{B}{A}(\partial_x^{-1} \zeta+ \zeta \partial_x^{-1})+\frac{\eps^2}{3}\frac{\kappa}{A}\partial_x+\beta^2 \partial_y \partial_x^{-3}\partial_y\right)\,,
\end{equation}
as can be shown by direct computation. 

By inserting the inverse operator \eqref{inv_canc} in the expression  \eqref{P_kp_red} we finally find
\begin{equation} \label{red_kP}
     {\cal P}_{KP}^{D} = \frac{1}{4}\frac{c_0}{g'}\left(-2 \partial_x-\frac{\alpha}{2}\frac{B}{A}(\zeta \partial_x +\partial_x \zeta)-\frac{\eps^2}{3}\frac{\kappa}{A}\partial_{xxx}-\beta^2 \partial_y \partial_x^{-1}\partial_y\right)\,.
\end{equation}
We remark that,  consistently collecting the three small parameters $\alpha\sim\epsilon^2\sim\beta^2$ and suitably rescaling variables to get rid of the hardware parameters $A$ and $\kappa$, this expression reduces to 
\begin{equation}\label{KP-pencil}
{\cal P}_{KP}^{D}\simeq -2 \partial_x-\alpha\left(\partial_{xxx}+2\zeta\partial_x+\zeta_x + \partial_y \partial_x^{-1}\partial_y\right) \, , 
\end{equation}
which brings it closer to the notation in the literature for the Poisson pair of the KP equation, viewed as evolutionary non-local partial differential equation in $2$-space dimensions (see also \cite{FoSa88}), where $\alpha$ plays the role of (inverse) parameter of the Poisson pencil.

As a final step, we have to determine the asymptotic expansion of the constrained Hamiltonian. 
Enforcing the constraints $\Phi_1=0,\,\Phi_2 =0$ in the expression of the functional \eqref{H_KP-full}  we obtain the preliminary form of the constrained Hamiltonian as 
\begin{equation}
    \mathcal{H}_{KP}^{(c)} = \int \left({g'}\zeta^2+\frac{\alpha}{4}B  \frac{g'^2}{c_0^2} \zeta^3 + \frac{\beta^2}{2}A \frac{g'^2}{c_0^2}\Big((\partial_x^{-1}\partial_y\zeta)^2-\zeta \partial_x^{-1}\partial_y\partial_x^{-1}\partial_y\zeta \Big)\,  \right) dx\, dy\, , 
    \label{prelHc}
\end{equation}
where we used \eqref{kp_constr}, and  $\gamma_2 = \partial_x^{-1}\partial_y \gamma_1$ so that  $ \gamma_2 = \dsl{ \frac{g'}{c_0}}\partial_x^{-1}\partial_y\zeta + O(\alpha, \eps^2, \beta^2)\, $,  and the fact that $ {A\, g'^2}/{c_0^2}\equiv g'$.
   
Thanks to the symmetry of the operator $\partial_x^{-1}\partial_y$, the $O(\beta^2)$-term in \eqref{prelHc}  can be discarded, to conclude that the restricted Hamiltonian is 
    \begin{equation}\label{Kp-restr-ham}\
    \mathcal{H}_{{KP}}^{(c)} = \int \left(g'\zeta^2 +\frac{\alpha}{4}B  \frac{g'^2}{c_0^2} \zeta^3\right) dx\, dy \, .
    \end{equation}
As a final check,  we note that the variational derivative of the constrained Hamiltonian \eqref{Kp-restr-ham} with respect to $\zeta$ is
\begin{equation}
    \frac{\delta\mathcal{H}_{KP}^{(c)}}{\delta \zeta} = 2 g'\zeta +\alpha \frac{3}{4} \frac{g'^2}{c_0^2}B\zeta^2\, .
\end{equation}
By applying the Dirac-reduced Poisson tensor \eqref{red_kP} to  this differential we obtain
\begin{equation}
    \zeta_t +c_0\zeta_x+\alpha \frac{3}{4} B \frac{g'}{c_0}(\zeta^2)_x+\frac{\eps^2}{6}\kappa\frac{g'}{c_0}\zeta_{xxx}+\frac{\beta^2}{2}c_0\partial_x^{-1}\zeta_{yy} = 0\, ,
\end{equation}
which, after taking an $x$-derivative, is exactly the KP-II equation \eqref{KP_eq}.
\subsection{KP traveling waves in the channel}
For the sake of completeness, we shall now derive the solitary wave solution to the KP equation, by means of the traveling-wave ansatz
\begin{equation}
    \zeta(x,y,t) = \zeta(x-ct +qy)\, ,
\end{equation}
where $x+qy=0$ is the equation defining the ``line" traveling wave, where $q$ according to the weak transversal dependence assumption, is small.
The KP equation \eqref{KP_eq} transforms  into 
\begin{equation}
    (c_0-c)\zeta_{xx}+\frac{3}{4}\alpha \frac{g'}{c_0}B(\zeta^2)_{xx}+\frac{\eps^2}{6}\frac{g'}{c_0}\kappa \zeta_{xxxx}+\frac{\beta^2}{2}c_0 q^2\zeta_{xx}=0
\end{equation}
where we kept the definition of  $c_0=\sqrt{A g' }$. 
Setting for simplicity $\alpha = \eps^2 = \beta^2 = 1$ and integrating  we obtain the usual Newton 
zero-energy relation for a point mass nonlinear oscillator  with a cubic potential of the form 
\begin{equation}\label{equa}
    \frac{\kappa g'}{12 c_0} \zeta_x^2 = -\left(\frac{c_0(1+\frac{q^2}{2})-c}{2}\zeta^2+\frac{1}{4}\frac{g'}{c_0}B \zeta^3\right)\,.
\end{equation}
Defining by 
\begin{equation}\label{adef}
    \zeta_m = \frac{2 c_0(c-c_0(1+\frac{q^2}{2}))}{g'B}\, 
\end{equation}
the non-zero root of the cubic potential, we can rewrite the previous equation \eqref{equa} compactly as
\begin{equation}\label{newt_pot_KdV}
    \zeta_x^2 =\frac{3 B}{\kappa} \zeta^2(\zeta_m-\zeta)
\end{equation}
as in the KdV case~(see \cite{Wh2000}). Similarly to the full system's solitary wave discussion, for solutions to exist the potential's concavity at the origin has to be negative which forces the speed $c$ to be higher than that of the linearized equation,   
\begin{equation}
    c>c_0\left(1+\frac{q^2}{2}\right)\, .
    \label{cc0}
\end{equation}
As in the plane wave solitons of the full system discussed in Section \ref{TPSW-KBKB},  the sign of $\zeta_m$ depends on the sign of $B$: when $B$ is positive, we have a soliton of elevation, while when $B$ is negative the soliton is of depression. 

The explicit line soliton formula  (see, e.g., \cite{Wh2000}) is, as well known,  
\begin{equation}
    \zeta(x,y,t) = \zeta_m\, \mathrm{sech}^2\left(\sqrt{\frac{3 B \zeta_m}{4\kappa}}(x-ct+qy)\right)\, .
    \end{equation}
    We notice that, thanks to \eqref{adef} and the condition \eqref{cc0}, the argument of the hyperbolic function is always positive.
    
A non-trivial restriction on the velocity is however given by the parameters of the problem. Indeed we have to impose {\em a posteriori} that the solitary wave does not contact the physical boundaries, where, incidentally,  the whole formalism breaks down from the very beginning at the level of the parent equations~\cite{CFOP14}. That is, we have to require
\begin{equation}\label{condition_velocity}
\begin{cases}\medskip 
 \zeta_m>  -\dsl{\frac{h_2}{{h}}}\quad \text{for } B<0\\
  \zeta_m<  \dsl{\frac{h_1}{{h}}}\quad  \text{for } B>0\, .
\end{cases}
\end{equation}
These  conditions \ translate into the following constraints on the speed $c$ of  the soliton:
\begin{equation}
\begin{split}
  c_0\Big(1+\frac{q^2}{2}\Big)<  c<c_0\Big(1+\frac{q^2}{2}\Big)\left(1-\frac{h_2 B}{2A{h}(1+\frac{q^2}{2})}\right)\, \quad \text{if} \quad B<0,\\
 c_0\Big(1+\frac{q^2}{2}\Big)<   c<c_0\Big(1+\frac{q^2}{2}\Big)\left(1+\frac{h_1 B}{2A{h}(1+\frac{q^2}{2})}\right)\, \quad \text{if} \quad B>0\, .
    \end{split}
\end{equation}
Therefore, enforcing these ``engineering'' constraints set an upper limit to the soliton speed, and ensures the existence of a range of speeds (and amplitudes)  with which KP-II solitons can describe traveling waves in our two-fluid system. The model inherits the same limitations of weakly nonlinear assumptions briefly addressed at the end of Section \ref{specsol} for the parent model.

\section{Conclusions and perspectives}\label{CP}
In this paper we studied the problem of sharply stratified 2-layer Euler fluids in $3$ dimensions from a Hamiltonian perspective. Our starting point was the Hamiltonian structure for incompressible fluids  with general variable density introduced in \cite{Bow87} (a generalization to $3D$  of that of  Benjamin~\cite{Ben86} for $2D$ settings) which we termed Benjamin-Bowman Hamiltonian structure. We considered the sharply stratified two-layer case as a submanifold  of the general field configuration and followed Marsden-Ratiu~\cite{MR86} to a Hamiltonian reduction that yielded: ({\it i}) the description of the effective free parameters of the theory, i.e., the interface displacement  
$\zeta$ and the components $(\wit{\sigma}, \wit{\tau})$ of the tangential momentum jump, or shear, across the interface   (which can be viewed as a density-weighted vorticity supported on the interface), and 
({\it ii}) a simple reduced structure for these Hamiltonian variables.

We then turned to the problem of expressing  the energy of the system via these free variables. We worked in the long wave asymptotic limit, in which the ratio $\epsilon$ between the typical vertical scales charcterized by $h$ and the horizontal ones by $L$   is small. Contrary to the $2D$ case (see e.g., \cite{Lanbook}), it is known that the long wave asymptotics  is not enough to obtain a local Hamiltonian density. A local theory, which is the one we focussed here, can be obtained in the Weakly Nonlinear approximation, defined by the balance of small parameters $\alpha=\dsl{a}/{h}=O(\eps^2)$, $a$ being a typical wave amplitude.
In this case the resulting system coincides with the one known in the literature  as the Kaup-Broer-Kupershimdt-Boussinesq (KBK-Boussinesq) model for $3D$ water waves. In the above approximations, the relevant Hamiltonian variables are the interfacial displacement and the {\em horizontal} momentum shear vector. An important feature of the two-layer model is that the coefficient 
$$
B=\dsl{\frac{\rho_2h_1^2 -\rho_1 h_2^2}{h\, (\rho_2h_1 + \rho_1h_2)}}
$$ 
of  the non-linear term in the effective Hamiltonian may change sign according to the ``hardware parameters" of the physical system, viz. the layer densities $\rho_j$ and the asymptotic heights $h_j$, $j=1,2$. While this sign-change of nonlinearity already occurs in $2D$, its implications in $3D$ settings are perhaps more remarkable. This was brought forth by considering the simplest solutions to this model. We derived the dispersion relations and then discussed traveling waves, showing how the sign of $B$ affects solitary wave solutions which, just as in the $2D$ case, turned out to be wave of depression  when $B<0$ and of elevation  for $B>0$.  However, we also briefly touched upon the problem of stationary solutions, which naturally arise for nonlinear equations in more that one space dimensions. Again, the parameter $B$ played a crucial role, but curiously its effects are somehow reversed with respect to the traveling wave solutions. Indeed, since for a stationary solution  
$\zeta=-{B}|\bs{\gamma}|^2/2$, 
we see that when $B<0$ the interface is attracted by the upper plate, while for $B>0$ by the lower plate. Further study of this mathematical phenomenon and its possible physical implications will be addressed in a future work.

We finally considered the so-called KP-regime, obtaining by breaking the $SO(2)$ symmetry of the problem and assuming that the horizontal independent  variables $(x,y)$ have different scaling laws along the two horizontal directions, namely $x = Lx^*$ and $y = L'y^*$, with  ${L}/{L'} = O(\eps)$.
Unidirectionalization of the resulting model leads to a Kadomtsev-Petviashvili-II equation for the interface elevation $\zeta$, with $B$ dictating the sign of the nonlinear term. Finally, interpreting the unidirectionalization process as a set of constraints on the dependent variables, we show how the Hamiltonian structures of the KP equation can be obtained from that of the KBK-Boussinesq parent system by means of a Dirac reduction process.

 In summary,   our work has framed the problem of evolution of internal waves in $3D$ settings from a Hamiltonian perspective. Not unexpectedly, the effective equations in the weakly 
 non-linear asymptotics reproduce well known models in the theory of water waves, i.e., the KBK-Boussinesq and the KP-II systems.  These models share the  notable feature, typical of internal waves, of the zero crossing  of the coefficient of the  non-linear terms with respect to variations of the densities and asymptotic layer depths, which signals that higher order nonlinearity would come to play a relevant role. From a theoretical point of view, the challenge is to try to go beyond the WNL approximation. In this respect, the mastering of the non localities inherent to the theory, and their framing within the general Hamiltonian reduction scheme herewith presented, is crucial. From a more applicative standpoint, future work will have to assess how the predictions afforded by the models will be reflected, at least qualitatively,  by actual experiments and those {\em in silico}, with attention paid to (quasi) stationary situations.

\subsection*{Acknowledgments}
 We thank D.\ Noja for useful discussions. 
This project was carried out with support by the National Science Foundation under grants RTG DMS-0943851, CMG ARC-1025523, DMS-1009750, DMS-1517879, DMS-1910824, DMS-2308063, by the Office of Naval Research under grants N00014-18-1-2490, N00014-23-1-22478 and DURIP N00014-12-1-0749.
This project has received fundings by 
the European Union's Horizon 2020 research and innovation programme under the Marie Sk{\l}odowska-Curie grant no 778010 {\em IPaDEGAN}
 and by the PRIN 2022TEB52W-PE1 Project ``The charm of integrability: from nonlinear waves to random matrices."
 We also gratefully acknowledge the financial support for RC's visit to Milano-Bicocca 
 by the GNFM Section of INdAM, under which part of this work was carried out, 
 and the financial support of the project MMNLP (Mathematical Methods in Non Linear Physics) of the
INFN. RC \& MP thank the Department of Mathematics and Applications of the University of Milano-Bicocca, ES thanks the Mathematics Department of UNC, and GF thanks SISSA-Trieste for their hospitality where part of this study was carried out.  We finally thank the anonymous referees for their comments which helped improve our presentation.

\subsubsection*{Data availability statement}
No data was used for the research described in the article.

\appendix
\section{Canonical  versus Benjamin-Bowman's Hamiltonian structure}\label{App A}
The Benjamin-Bowman Hamiltonian structure we used in this paper can be seen as the canonical Zakharov-Kuznetsov (ZK) Hamiltonian structure~\cite{Z85,ZK97,S88} for the Euler equations.
Indeed, the ZK approach makes use, along with the density variable $\rho$, of the so-called Clebsch potentials $(\Phi, \la,\mu)$, that define the Eulerian velocity field by 
\begin{equation}\label{a1}
\mb{v}=\dsl{\frac{\la}{\rho}}\> \bs{\nabla}\mu+\bs{\nabla}\Phi\, .
\end{equation}
In the ZK representation  the pairs $(\rho,\Phi)$ and $(\la,\mu)$ are  canonically conjugated, so that  in the  $4$-tuple  $(\rho, \Phi, \la,\mu)$ (which  we shall later denote as $(\rho,\mb{C})$, i.e.$(C^1,C^2,C^3)=(\Phi, \la,\mu)$) the Poisson tensor is the canonical one
\begin{equation}\label{KZPoi}
{J}_{can}=\begin{pmatrix}
0&1&0&0\\-1&0&0&0\\0&0&0&1\\0&0&-1&0
\end{pmatrix}\, .
\end{equation}
The Benjamin-Bowman variables $(\rho, \bs{\Sigma})$ are related with  the  Clebsch potentials by
\begin{equation}\label{Cl->Sigma}
\bs{\Sigma}=\bs{\nabla}\times (\rho\bm{v})=\bna\rho\times\bna\Phi+\bna\la\times\bna\mu\, .
\end{equation}
\begin{prop}
Let $\mb{F}$  denote the change of coordinates
\begin{equation}\label{chacoo}
(\rho, \Phi, \la,\mu) {\buildrel\mb{F} \over \longrightarrow }\, (\rho, \bs{\Sigma})\, , 
\end{equation}
with $\bs{\Sigma}$ given by \eqref{Cl->Sigma}.

The  canonical Poisson tensor \eqref{KZPoi} and the Benjamin-Bowman  Poisson tensor \eqref{PBow}, i.e.,
\begin{equation}\label{Pbow2}
P_B=
-\left(
\begin{array}{cc}\bigskip
0&\bm{\nabla}\cdot(\rho\, \bm{\nabla}\times\>\> )\\
\bm{\nabla}\times(\rho\, \bm{\nabla}\cdot \>\>)
&\bm{\nabla}\times(\mb{\Sigma}\times \bm{\nabla}\times \> \>)
\end{array}\right)\, \left(
\begin{array}{c}\medskip
\dsl{\frac{\delta \HH}{\delta\rho}}\\
\dsl{\frac{\delta \HH}{\delta \Sigb}}
\end{array}
\right)
\end{equation}
are $\mb{F}$-related.
\end{prop}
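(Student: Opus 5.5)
To prove that the two tensors are $\mb{F}$-related I will verify $P_B = \mb{F}_*\, J_{can}\, \mb{F}_*^{\,T}$. Here $\mb{F}_*$ is the Fréchet derivative of \eqref{chacoo} and $\mb{F}_*^{\,T}$ is its formal adjoint with respect to the $L^2$ pairing on $\BB$. Boundary terms are discarded under the decay and no-flux conditions \eqref{bEEq}.

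The equivalent dual statement is simpler to work with. For any functional $\mathcal{G}(\rho,\Sigb)$, the chain rule gives the gradients with respect to the Clebsch variables as $\mb{F}_*^{\,T}$ applied to $(\delta\mathcal{G}/\delta\rho,\ \delta\mathcal{G}/\delta\Sigb)$. I will check that the canonical Hamilton equations generated by $J_{can}$, pushed forward by $\mb{F}_*$, reproduce $(\rho_t,\Sigb_t)$ as given by \eqref{Pbow2}.

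\emph{Step 1: the linearization.} Write $\mb{A}=\delta\mathcal{G}/\delta\Sigb$ and $f=\delta\mathcal{G}/\delta\rho$. Perturbing \eqref{Cl->Sigma} gives
\[
\delta\Sigb=\bna\delta\rho\times\bna\Phi+\bna\rho\times\bna\delta\Phi+\bna\delta\la\times\bna\mu+\bna\la\times\bna\delta\mu .
\]
Pairing with $\mb{A}$ and integrating by parts uses the identity $\int \mb{A}\cdot(\bna a\times\bna b)=\int a\,\bna b\cdot(\bna\times\mb{A})$, modulo boundary terms. From this the Clebsch gradients follow:
\[
\frac{\delta\mathcal{G}}{\delta\Phi}=-\bna\rho\cdot\bna\times\mb{A},\qquad \frac{\delta\mathcal{G}}{\delta\la}=\bna\mu\cdot\bna\times\mb{A},\qquad \frac{\delta\mathcal{G}}{\delta\mu}=-\bna\la\cdot\bna\times\mb{A}.
\]
The $\rho$-gradient is $f+\bna\Phi\cdot\bna\times\mb{A}$, with signs to be fixed carefully.

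\emph{Step 2: apply $J_{can}$.} This gives $\rho_t=\delta\mathcal{G}/\delta\Phi$, $\Phi_t=-\delta\mathcal{G}/\delta\rho$, $\la_t=\delta\mathcal{G}/\delta\mu$ and $\mu_t=-\delta\mathcal{G}/\delta\la$, up to the overall sign convention of \eqref{KZPoi}. The $\rho$-component is immediately $\rho_t=-\bna\rho\cdot(\bna\times\mb{A})=-\bna\cdot(\rho\,\bna\times\mb{A})$, since $\bna\cdot\bna\times\mb{A}=0$. This is the first row of \eqref{Pbow2}.

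\emph{Step 3: the $\Sigb$-component.} Write $\mb{W}=\bna\times\mb{A}$. Pushing forward, $\Sigb_t=\bna\rho_t\times\bna\Phi+\bna\rho\times\bna\Phi_t+\bna\la_t\times\bna\mu+\bna\la\times\bna\mu_t$. The $f$-dependent piece is $-\bna\rho\times\bna f=\bna\times(\ldots)$, which should give the entry $-\bna\times(\rho\bna f)$ in \eqref{Pbow2} after matching sign conventions. The remaining terms all have the form $\bna(\mb{W}\cdot\bna a)\times\bna b+\bna a\times\bna(\mb{W}\cdot\bna b)$, summed over the two pairs $(a,b)=(\rho,\Phi)$ and $(\la,\mu)$. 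Here I use the identity that $\bna a\times\bna b=\bna\times(a\bna b)$ is divergence-free and Lie-advected. Concretely,
\[
\bna\times\big((\bna a\times\bna b)\times\mb{W}\big)=-\bna(\mb{W}\cdot\bna a)\times\bna b-\bna a\times\bna(\mb{W}\cdot\bna b),
\]
which is the statement that $\mathcal{L}_{\mb{W}}$ commutes with $d$ on the 2-form $da\wedge db$, combined with $\bna\cdot\mb{W}=0$. Summing over the two pairs reconstructs $\Sigb$ and yields $-\bna\times(\Sigb\times\bna\times\mb{A})$, the lower-right block.

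The main obstacle is Step 3: keeping consistent signs between the convention in \eqref{KZPoi} and the overall minus in \eqref{Pbow2}, and verifying the vector identity above. I would verify the identity in the language of differential forms to avoid index errors. A secondary point is that $\mb{F}$ is not injective (it is a gauge-type projection), so "$\mb{F}$-related" is understood in the push-forward sense: $\mb{F}_*J_{can}\mb{F}^*=P_B\circ\mb{F}$. That is exactly what the gradient computation establishes, and no inverse of $\mb{F}_*$ is needed.
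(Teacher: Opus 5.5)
Your route is the paper's. Both establish $P_B=\mathbf{F}_*\,J_{can}\,\mathbf{F}^*$, understood, as you say, as $\mathbf{F}_*J_{can}\mathbf{F}^*=P_B\circ\mathbf{F}$. The paper writes $\mathbf{F}_*$ and $\mathbf{F}^*$ as Levi-Civita operator matrices and contracts with $\epsilon^{lmk}\epsilon^{kij}=\delta^{li}\delta^{mj}-\delta^{lj}\delta^{mi}$; you evaluate the same product on a test covector $(f,\mathbf{A})$ using coordinate-free identities. Steps 1 and 2 are correct, and no sign adjustment is needed. With $J_{can}$ exactly as in \eqref{KZPoi} you get $\rho_t=-\nabla\cdot(\rho\,\nabla\times\mathbf{A})$, and the $f$-part of $\boldsymbol{\Sigma}_t$ is $-\nabla\rho\times\nabla f=-\nabla\times(\rho\nabla f)$. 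Both are exactly the entries of \eqref{Pbow2}.

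Step 3, however, uses an identity with the wrong sign. The correct statement is
\[
\nabla\times\big((\nabla a\times\nabla b)\times\mathbf{W}\big)=\nabla(\mathbf{W}\cdot\nabla a)\times\nabla b+\nabla a\times\nabla(\mathbf{W}\cdot\nabla b),
\]
and it holds for every $\mathbf{W}$, divergence-free or not. It is just $\mathcal{L}_{\mathbf{W}}(da\wedge db)=d\,\iota_{\mathbf{W}}(da\wedge db)$, because $\iota_{\mathbf{W}}(da\wedge db)=(\mathbf{W}\cdot\nabla a)\,db-(\mathbf{W}\cdot\nabla b)\,da$ is the $1$-form of $(\nabla a\times\nabla b)\times\mathbf{W}$. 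As a check, take $a=x$, $b=y$: both sides equal $(-\partial_zW_1,-\partial_zW_2,\partial_xW_1+\partial_yW_2)$, whereas your version has the opposite sign.

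There is also a second slip. Your own Step 2 gives $a_t=-\mathbf{W}\cdot\nabla a$ for $a=\rho,\lambda,\mu$ and $\Phi_t=-f-\mathbf{W}\cdot\nabla\Phi$. So the $\mathbf{W}$-terms of $\boldsymbol{\Sigma}_t$ are $-\sum_{(a,b)}\big[\nabla(\mathbf{W}\cdot\nabla a)\times\nabla b+\nabla a\times\nabla(\mathbf{W}\cdot\nabla b)\big]$, carrying a minus sign that you dropped when you said they "have the form" of the bracket.

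The two errors cancel, which is why you land on the correct $-\nabla\times(\boldsymbol{\Sigma}\times\nabla\times\mathbf{A})$. But if you combined the identity as you stated it with your own Step 2 signs, you would get $+\nabla\times(\boldsymbol{\Sigma}\times\nabla\times\mathbf{A})$ and wrongly conclude that the tensors are not $\mathbf{F}$-related. Fix both signs and the argument is complete; it is arguably more transparent than the index computation in the paper.
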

{\bf Proof.} The proposition can be proven by  tedious but straightforward computations, whose key points are collected below.

Denoting by  $\mb{F}_*$ the (Fr\'echet) Jacobian of the map  \eqref{chacoo}, and with $\mb{F}^*$ its transpose, we have to prove the equality
\begin{equation}\label{Ptransform}
P_B=\mb{F}_*\cdot {J}_{can}\cdot  \mb{F}^*\, , 
\end{equation}
where $\cdot$ stands for both matrix multiplication and composition of differential operators.

We have  (using Einstein's convention of summation over repeated indices that run from $1$ to $3$) that the Jacobian matrix is
\begin{equation}\label{jacF}
\mb{F}_*=\begin{pmatrix}\medskip
1&0&0&0\\ \medskip
\eps^{1lm}\partial_m\Phi\partial_l&\eps^{1lm}\partial_l\rho\partial_m&\eps^{1lm}\partial_m\mu\partial_l&\eps^{1lm}\partial_l\la\partial_m\\ \medskip
\eps^{2lm}\partial_m\Phi\partial_l&\eps^{2lm}\partial_l\rho\partial_m&\eps^{2lm}\partial_m\mu\partial_l&\eps^{2lm}\partial_l\la\partial_m\\ \medskip
\eps^{3lm}\partial_m\Phi\partial_l&\eps^{3lm}\partial_l\rho\partial_m&\eps^{3lm}\partial_m\mu\partial_l&\eps^{3lm}\partial_l\la\partial_m
\end{pmatrix}
\end{equation}
and, thanks to the antisymmetry of the Levi-Civita $\eps$ symbol, its transpose is 
\begin{equation}\label{jacTF}
\mb{F}^*=\begin{pmatrix}\medskip
1&-\eps^{1lm}\partial_m\Phi\partial_l&-\eps^{2lm}\partial_m\Phi\partial_l&-\eps^{3lm}\partial_m\Phi\partial_l\\ \medskip
0&-\eps^{1lm}\partial_l\rho\partial_m&-\eps^{2lm}\partial_l\rho\partial_m&-\eps^{3lm}\partial_l\rho\partial_m\\ \medskip
0&-\eps^{1lm}\partial_m\mu\partial_l&-\eps^{2lm}\partial_m\mu\partial_l&-\eps^{3lm}\partial_m\mu\partial_l&\\ \medskip
0&-\eps^{1lm}\partial_l\la\partial_m&-\eps^{2lm}\partial_l\la\partial_m&-\eps^{3lm}\partial_l\la\partial_m
\end{pmatrix}\, .
\end{equation}
Carefully considering the operator multiplications  and using the well known identity
\begin{equation}
 \epsilon^{lmk}\epsilon^{kij}=\delta^{l i}\delta^{m j}-\delta^{l j }\delta^{m i}\, , 
\end{equation}
the assertion follows.
\hfill $\square$

\section{The Hamiltonian reduction procedure}\label{App B}

In this section we give some details on the reduction procedure presented in Section \ref{sectHamred}. We consider the submanifold 
$\CS\subset\CM$, 
see \eqref{S/mfld} and \eqref{M-mfld}, and we restrict the Poisson tensor $P_B$, see \eqref{PBB}, on $\CS$.

Our first task is to compute the annihilator $(T\CS)^0$. It is easily seen that it coincides with the kernel of $i^*:T^*\CM\to T^*\CS$ at the points $(\zeta,{\wit{\sigma}},{\wit{\tau}})\in\CS$, where $i:\CS\to \CM$ is the canonical injection. Since the differential $i_*:T\CS\to T\CM$ is given by 
\begin{equation}
\label{tangentS}
\begin{split}
(\dot\zeta,\dot{\wit{\sigma}},\dot{\wit{\tau}})&\mapsto\left((\rho_2-\rho_1)\dot\zeta\delta(z-\zeta), \quad
\dot{\wit{\sigma}} 
\delta(z-\zeta)-{\wit{\sigma}} 
\dot\zeta\delta'(z-\zeta),
\quad \dot{\wit{\tau}} 
\delta(z-\zeta)-{\wit{\tau}} 
\dot\zeta\delta'(z-\zeta),  \right.
\\ & \qquad\left.
\quad (\dot{\wit{\sigma}}\zeta_x+\dot{\wit{\tau}}\zeta_y+\wit{\sigma}{\dot\zeta}_x+\wit{\tau}{\dot\zeta}_y)\delta(z-\zeta)
-\dot\zeta({\wit{\sigma}}\zeta_x+{\wit{\tau}}\zeta_y)\delta'(z-\zeta)\right),
\end{split} 
\end{equation}
one can check that the image of $(\alpha_\rho,\alpha_\sigma,\alpha_\tau,\alpha_\chi)\in T^*\CM$ under $i^*$ is the covector $(\mu_\zeta,\mu_\wit{\sigma},\mu_\wit{\tau})$, where
\begin{eqnarray}
&&\mu_\zeta=(\rho_2-\rho_1)\wit{\alpha_\rho}+\wit{\sigma}\left(\wit{(\alpha_\sigma)_z}-\wit{(\alpha_\chi)_x}\right)
+\wit{\tau}\left(\wit{(\alpha_\tau)_z}-\wit{(\alpha_\chi)_y}\right)
\label{pre-ele-a}
\\
&&\mu_{\wit{\sigma}}=\wit{\alpha_\sigma}+\zeta_x\wit{\alpha_\chi}
\label{pre-ele-b}
\\
&&\mu_{\wit{\tau}}=\wit{\alpha_\tau}+\zeta_y\wit{\alpha_\chi}
\label{pre-ele-c}
\end{eqnarray}
and, as usual, we denoted with a tilde the evaluation $z=\zeta(x,y)$, i.e., the restriction to the interface between the two fluids. Hence we can conclude 
that $(\alpha_\rho,\alpha_\sigma,\alpha_\tau,\alpha_\chi)$ belongs to $(T\CS)^0=\operatorname{Ker} i^*$ if and only if the conditions 
\begin{eqnarray}
&&(\rho_2-\rho_1)\wit{\alpha_\rho}+\wit{\sigma}\left(\wit{(\alpha_\sigma)_z}-\wit{(\alpha_\chi)_x}\right)
+\wit{\tau}\left(\wit{(\alpha_\tau)_z}-\wit{(\alpha_\chi)_y}\right)=0
\label{ele-a}
\\
&&\wit{\alpha_\sigma}+\zeta_x\wit{\alpha_\chi}=0
\label{ele-b}
\\
&&\wit{\alpha_\tau}+\zeta_y\wit{\alpha_\chi}=0
\label{ele-c}
\end{eqnarray}
are fulfilled.

Now we show that ${\cal P}_B(T\CS)^0=\{0\}$, i.e., the kernel of ${\cal P}_B$ contains $(T\CS)^0$ at every point of $\CS$. Let 
$(\dot\rho,\dot\sigma,\dot\tau,\dot\chi)$ be the image of $(\alpha_\rho,\alpha_\sigma,\alpha_\tau,\alpha_\chi)\in(T\CS)^0$ under ${\cal P}_B$. Then, using the expression \eqref{PBB} of ${\cal P}_B$ at the points of $\CS$ yields
\begin{equation}
\label{PB1}
\begin{split}
\dot\rho&=(\rho_1-\rho_2)\left[\zeta_x\left((\alpha_\chi)_y-(\alpha_\tau)_z\right)
+\zeta_y\left((\alpha_\sigma)_z-(\alpha_\chi)_x\right)
-(\alpha_\tau)_x+(\alpha_\sigma)_y\right]\delta(z-\zeta)\\
&=(\rho_1-\rho_2)\left[\left(\wit{\alpha_\sigma}+\zeta_x\wit{\alpha_\chi}\right)_y
-\left(\wit{\alpha_\tau}+\zeta_y\wit{\alpha_\chi}\right)_x\right]\delta(z-\zeta)\\
&=(\rho_1-\rho_2)\left[(\mu_\wit{\sigma})_y-(\mu_\wit{\tau})_x\right]\delta(z-\zeta),
\end{split}
\end{equation}
where we used \eqref{pre-ele-b} and \eqref{pre-ele-c} in the last equality. Then \eqref{ele-b} and \eqref{ele-c} imply that $\dot\rho=0$.
With similar computations, one can show that  
\begin{equation}
\label{PB2}
\begin{split}
\dot\sigma&=-(\mu_\zeta)_y\delta(z-\zeta)+\sigma\left[(\mu_\wit{\sigma})_y-(\mu_\wit{\tau})_x\right]\delta'(z-\zeta)\\
\dot\tau&=(\mu_\zeta)_x\delta(z-\zeta)+\tau\left[(\mu_\wit{\sigma})_y-(\mu_\wit{\tau})_x\right]\delta'(z-\zeta)\\
\dot\chi&=\left[\zeta_y(\mu_\zeta)_x-\zeta_x(\mu_\zeta)_y\right]\delta(z-\zeta)
+\left[(\mu_\wit{\sigma})_y-(\mu_\wit{\tau})_x\right](\wit{\sigma}\zeta_x+\wit{\tau}\zeta_y)\delta'(z-\zeta),
\end{split}
\end{equation}
entailing that $\dot\sigma=\dot\tau=\dot\chi=0$ as a consequence of \eqref{ele-a}, \eqref{ele-b}, and \eqref{ele-c}. Hence we have shown that (\ref{S=N}) holds, so that the reduced Poisson manifold coincides with $\CS$, which is parametrized by the triple $(\zeta, \wit{\sigma},\wit{\tau})$. 

The final step is to find the reduced Poisson tensor on $\CS$. To this aim, we will use the projection $\pi:\CM\to\CS$ given by \eqref{projmap}, 
seeing $\CS$ as a quotient manifold rather than a submanifold of $\CM$. We know that 
$(\alpha_\rho,\alpha_\sigma,\alpha_\tau,\alpha_\chi)\in T^*\CM$ is an extension of $(\mu_\zeta,\mu_\wit{\sigma},\mu_\wit{\tau})$, i.e., 
one of its preimages of under $i^*$, if and only if \eqref{pre-ele-a}, \eqref{pre-ele-b}, and \eqref{pre-ele-c} hold. We also know --- see \eqref{PB1}
and the first two equations in \eqref{PB2} --- that the evaluation of $P_B$ on such a covector in $T^*\CM$ gives
$$
\begin{pmatrix}
\dot\rho\\
\dot\sigma\\
\dot\tau
\end{pmatrix}
=
\begin{pmatrix}
(\rho_1-\rho_2)\left[(\mu_\wit{\sigma})_y-(\mu_\wit{\tau})_x\right]\delta(z-\zeta)\\
-(\mu_\zeta)_y\delta(z-\zeta)+\sigma\left[(\mu_\wit{\sigma})_y-(\mu_\wit{\tau})_x\right]\delta'(z-\zeta)\\
(\mu_\zeta)_x\delta(z-\zeta)+\tau\left[(\mu_\wit{\sigma})_y-(\mu_\wit{\tau})_x\right]\delta'(z-\zeta)
\end{pmatrix}.
$$
Applying the differential of $\pi$, we obtain
\begin{equation}
\label{redux-PB}
\begin{split}
\dot\zeta&=\frac1{\rho_2-\rho_1}\int_{-h_2}^{h_1}\dot\rho\,dz=-(\mu_\wit{\sigma})_y+(\mu_\wit{\tau})_x\\
\dot{\wit{\sigma}}&=\int_{-h_2}^{h_1}\dot\sigma\,dz=-\int_{-h_2}^{h_1}(\mu_\zeta)_y\delta(z-\zeta)\,dz+
\int_{-h_2}^{h_1}\sigma\left[(\mu_\wit{\sigma})_y-(\mu_\wit{\tau})_x\right]\delta'(z-\zeta)\,dz=-(\mu_\zeta)_y,
\end{split}
\end{equation}
where the last integral in the second line vanishes since the interface does not touch the plates, i.e., $-h_2<\zeta<h_1$. 
In the same way, one finds that 
\begin{equation}
\label{redux-PB-bis}
\dot{\wit{\tau}}=(\mu_\zeta)_x,
\end{equation}
so that the reduced Poisson tensor on $\CS$ is given by \eqref{Pred}.
\section{On the Dirichlet-Neumann operator approach}
\label{App C}
In this appendix we compare the approach through the Dirichlet-Neumann operator of \cite{CGK05} and \cite{BSL08} (see, also, e.g.,~ \cite{CG94, CSchS97,CSS92}) to the computations of Subsection \ref{DN_op}. 

Following the seminal approach of \cite{Zak68} one can write the Euler equations for the two-layer case through the potentials $\Phi_i(x,y,z)$, 
\begin{align}
     &\mathbf{\Delta}\Phi_i=0\, \quad \text{in}\quad \Omega_{i}\, , \label{laplacianeq}\\
    &\Phi_{i\,t} + \frac{1}{2}|\mathbf{\nabla }\Phi_i|^2 = -\frac{p}{\rho_i}-g z\, \quad\text{in}\quad  \Omega_i\,, \label{berneq}
    \end{align} 
where $\Omega_i$ is the $i$-th fluid domain,  ${\mb \Delta} = \partial_x^2+\partial_y^2+\partial_z^2\,,\, {\mb \nabla}=(\partial_x,\partial_y,\partial_z)$,  
equipped with  the following set of boundary conditions
\begin{align}
    &\Phi_{i\,z} = 0\, \quad \text{at}\quad z=(-1)^{i-1} h_i\, , \label{bd_lids}\\
    &\zeta_t = \sqrt{1+\zeta_x^2+\zeta_y^2}\>\partial_n\Phi_2\, \quad \text{at} \quad z=\zeta\, ,\label{bd_interface}\\
   &\partial_n\Phi_1= \partial_n\Phi_2\, \quad \text{at} \quad z=\zeta\, ,\label{bd_cont}
 \end{align}
 where $\partial_n$ is the  normal  derivative at the interface associated with the lower fluid, \begin{equation}
     \textbf{n} = \frac{(-\zeta_x,-\zeta_y, 1)}{\sqrt{1+|(\zeta_x, \zeta_y)|^2}}\,, \quad \partial_n  = \textbf{n}\cdot \mathbf{ \nabla }\, .
 \end{equation}
``Trace'' variables are defined as
 $ \wit \Phi_{i} =  \Phi_{i\,|_{z=\zeta}}$. The starting point of the DN approach relies on the observation (see, e.g., \cite{BSL08}) that, 
once given $\wit \Phi_1(x,y)$, we can formally find both the bulk potentials  $\Phi_i$'s. Indeed, knowing $\wit \Phi_1$, we can solve the following Laplace problem with mixed boundary conditions for $\Phi_1$, 
 \begin{equation}\label{laplace_prob}
 \begin{cases}
  \mathbf{   \Delta} \Phi_1 = 0\, \quad \text{in}\quad \Omega_1\\
     \Phi_{1\,z} = 0\, \quad \text{at}\quad z = h_1\\
     \Phi_1 = \wit \Phi_1\, \quad \text{at} \quad z=\zeta\, . 
 \end{cases}
 \end{equation}
 Then, using the boundary condition \eqref{bd_cont}, we can consider the Neumann boundary value problem
 \begin{equation}\label{neum_prob}
     \begin{cases}
       \mathbf{   \Delta }\Phi_2 = 0\, \quad \text{in}\quad \Omega_2\\
     \Phi_{2\,z} = 0\, \quad \text{at}\quad z = -h_2\\
    \partial_n\Phi_2 = \partial_n\Phi_1\, \quad \text{at} \quad z=\zeta\, . 
     \end{cases}
 \end{equation}
 The compatibility condition of this Neumann problem is satisfied thanks to  \eqref{laplace_prob}, so that it has a unique solution up to constants. 
 
 It is known that the problem can be formulated within a canonical formalism, 
  the Hamiltonian  of the problem formally being given by the analogue of the $2$-dimensional counterpart found in~\cite{CGK05}, that is,
 \begin{equation}
\label{HamDNop}
\mathcal{H}[\zeta, \wit{\Phi}_1,\wit{\Phi}_2]=\frac12\int_{\RR^2 }\left(\rho_1\wit{\Phi}_1G_1[\zeta]\wit{\Phi}_1+\rho_2\wit{\Phi}_2 G_2[\zeta]\wit{\Phi}_2+g\, (\rho_2-\rho_1)\zeta^2 \right)dx\,dy\, , 
 \end{equation}
where $G_j[\zeta]$ are the Dirichlet to Neumann operators defined via the relations
\begin{equation}\label{DN/ops}
     G_1[\zeta]\wit \Phi_1=-\partial_n{{\Phi_1}_|}_{z = \zeta}\, ,\quad G_2[\zeta]\wit \Phi_2 = \partial_n{{\Phi_2}_|}_{z = \zeta}\, ,
 \end{equation}
 the minus  sign coming from the orientation of the interface.  
 
To fully describe the problem  in terms of free canonical variables, one needs to find the explicit relation between $\wit \Phi_1$ and $\wit \Phi_2$.  Namely, in the approach of \cite{CGK05}, one has to consider the Laplace problems associated with both fluids \eqref{laplace_prob}  and  \eqref{neum_prob}
 and formulate the problem in terms of the variables $\zeta$ and $ \xi=\rho_2\wit{\Phi}_2-\rho_1\wit{\Phi}_1$ of~\cite{BB97}.
 
 Using the boundary condition \eqref{bd_cont} one arrives at finding the constraint
 \begin{equation}\label{DtoN_rel}
    \nabla \wit \Phi_2 = -\nabla \big( G_2[\zeta]^{-1} \circ G_1[\zeta] (\wit \Phi_1)
    \big) \, .
 \end{equation}
As it customary, one introduces  the small parameters $\alpha, \epsilon$ as in Section~\ref{Asy_ener} in order to find asymptotic expansions of the DN operators, and, in particular, of the constraint \eqref{DtoN_rel}. 

In this asymptotics,  the system \eqref{laplace_prob} becomes
 \begin{equation}\label{laplace_prob_scaled}
    \begin{cases}
   \eps^2  \Delta \Phi_1 +\partial_z^2 \Phi_1 = 0\, \quad \text{in}\quad \Omega_1\\
     \Phi_{1\,z} = 0\, \quad \text{at}\quad z = \frac{h_1}{h}\\
     \Phi_1 = \wit \Phi_1\, \quad \text{at} \quad z=\alpha\zeta\, , 
 \end{cases}
\end{equation}
a similar formulation being given for \eqref{neum_prob}.

As  proven in \cite{CM85, CSchS97}, $G_1[\zeta]$ and $G_2[\zeta]$ admit a series expansions in $\zeta$, so that, taking  our scalings into account, we have 
\begin{equation}
    G_i[\alpha\zeta]=\sum_{j=0}^{\infty} \alpha^jG_i^j[\zeta]\, ,
\end{equation} 
By using the Fourier transform on the first equation of \eqref{laplace_prob_scaled} and on the first equation of  the corresponding system to \eqref{neum_prob},  working in the  WNL  asymptotics we get
\begin{equation}\label{expGop}
    G_i[\zeta] = (-\eps^2\frac{h_i}{h})\left(\Delta+\frac{\eps^2}{3}\frac{h_i^2}{h^2}\Delta^2 +(-1)^i \alpha\frac{h}{h_i}  \nabla \cdot(\zeta \nabla) + O(\alpha^2, \eps^4, \alpha\eps^2)\right)\, , 
\end{equation}
where it might be useful to remark that the operator $\nabla \cdot(\zeta \nabla\, )$ acting on a function $F$, stands for the divergence of the pointwise product of  $\zeta$ with the gradient of $\, F$.

The inversion of  $G_2[\zeta]$, even if in the WNL asymptotics, introduces a non-local term in the picture. Indeed, we can formally write
\begin{equation}\label{G2-1}
    G_2[\zeta]^{-1} = \left(Id-\frac{\eps^2}{3}\frac{h_2^2}{h^2}\Delta - \alpha\frac{h}{h_2}\Delta^{-1}  \nabla \cdot(\zeta \nabla) + O(\alpha^2, \eps^4, \alpha\eps^2)\right)(-\frac{1}{\eps^2}\frac{h}{h_2}\Delta^{-1})\, ,
\end{equation}
whence
\begin{equation}
\begin{split}
        G_2[\zeta]^{-1}\circ G_1[\zeta] &= \frac{h_1}{h_2}\left(Id-\frac{\eps^2}{3}\frac{h_2^2}{h^2}\Delta - \alpha \frac{h}{h_2}\Delta^{-1}  \nabla \cdot(\zeta \nabla) + O(\alpha^2, \eps^4, \alpha\eps^2)\right)\circ\\
       & \circ\left(Id+\frac{\eps^2}{3}\frac{h_1^2}{h^2}\Delta - \alpha \frac{h}{h_1}\Delta^{-1}  \nabla \cdot(\zeta \nabla) + O(\alpha^2, \eps^4, \alpha\eps^2)\right)\\
        &=\frac{h_1}{h_2}+\frac{\eps^2}{3}\frac{h_1}{h_2}(\frac{h_1^2}{h}-\frac{h_2^2}{h})\Delta-\alpha(\frac{h}{h_1}+\frac{h}{h_2}) \Delta^{-1} \nabla \cdot(\zeta \nabla) + O(\alpha^2, \eps^4, \alpha\eps^2)\, .
        \end{split}
\end{equation}
Substituting in the constraint equation \eqref{DtoN_rel} we  get, in the notations of \cite{BSL08}, the final expression of the constraint as 
\begin{equation}\label{nablapsi_rel_WNL}
    \nabla \wit \Phi_2 = -\frac{h_1}{h_2}\nabla\wit \Phi_1-\frac{\eps^2}{3}\frac{h_1}{h_2} \dfrac{h_1^2-h_2^2}{h^2}\Delta \nabla \wit \Phi_1+\alpha h \frac{h_1}{h_2}(\dfrac1{h_1}+\dfrac1{h_2})\, \Pi(\zeta\nabla\wit \Phi_1)+O(\alpha^2, \eps^4,\alpha\eps^2)\, ,
\end{equation}
where  $\Pi$ is the non-local operator 
\begin{equation}
    \Pi:= \Delta^{-1}(\nabla(\nabla \cdot\ ) )\,.
\end{equation}
We can explicitly recover relation \eqref{u2u1} of  Section \ref{DN_op} expressing the interface velocities of the upper fluid in terms of those of the lower fluid as follows. 
First we recall that 
\begin{equation}\label{nocurlwnlorder}
\wit {\mb{u}}_i = \nabla \wit{\Phi}_i+O(\alpha\eps^2)\, , 
\end{equation} 
so that   \eqref{nablapsi_rel_WNL} translates into
\begin{equation}\label{ui_rel_WNL}
     \wit{\mb u}_2 = -\frac{h_1}{h_2}\wit{\mb u}_1-\frac{\eps^2}{3}\frac{h_1}{h_2} \dfrac{h_1^2-h_2^2}{h^2}\Delta \wit{\mb  u}_1+\alpha h \frac{h_1}{h_2}(\dfrac1{h_1}+\dfrac1{h_2})\,\Pi(\zeta\wit {\mb u}_1)+O(\alpha^2, \eps^4,\alpha\eps^2)\, .
\end{equation}
Then we use the identity
\begin{equation}\label{vectrelappa}
\bm{\nabla} (\bm{\nabla }\cdot ( \zeta\wit{\mb u}_1) )= \Delta (\zeta\wit{\mb u}_1 )+ \bm{\nabla} \times \bm{\nabla} \times (\zeta\wit{\mb u}_1)= \Delta (\zeta\wit{\mb u}_1 )+\bm{\nabla}\times(\bm{\nabla}\zeta\times\wit{\mb u}_1)+O(\alpha\eps^2)
\end{equation}
(the second one following from~\eqref{nocurlwnlorder}). Since $\Delta^{-1}$ is a scalar operator and only the first two components of the curl need to be considered for the definition of $\Pi$,   we can write at the WNL order
\begin{equation}\label{Pinostrum}
 \Pi (\zeta  \wit{\mb{u}}_1)=\zeta \wit{\mb{u}}_1 + \bm{\nabla} \times  \Delta ^{-1} (\bm{\nabla} \zeta \times \wit{\mb{u}}_1)\, , 
\end{equation}
yielding the desired equivalence between~\eqref{nablapsi_rel_WNL} and \eqref{u2u1}.

\end{document}